\newtheorem{theorem}{Theorem}
\newtheorem{corollary}{Corollary}[theorem]
\newtheorem{principle}{Principle}
\begin{document}


\title{Quantum Resource Complementarity in Finite-Dimensional Systems}


\author{Justin K. Edmondson}
\email{Contact author: justin.k.edmondson.civ@us.navy.mil}
\affiliation{Farragut Technical Analysis Center, USN, Washington DC, USA.}


\date{\today}

\begin{abstract}
Quantum resources such as entanglement, information redundancy, and coherence enable revolutionary advantages but obey fundamental tradeoffs. We present a unified geometric constraint governing three core operational tasks: teleportation (\(q_1\)), cloning (\(q_2\)), and coherence-based metrology (\(q_3\)). For any tripartite quantum state \(\rho_{ABC}\), we show the tight inequality \(q_1^2 + q_2^2 + q_3^2 \leq 1\) confines all physically achievable resources to the positive octant of the unit ball. This Quantum Information Resource Constraint (QIRC) reflects an exclusion principle intrinsic to Hilbert space: optimizing one task necessitates sacrificing others. Crucially, \(q_1, q_2, q_3\) are experimentally measurable, making QIRC falsifiable in quantum platforms. Unlike abstract quantum resource theories (QRT) that quantify resources through entropy or monotones, our framework is fundamentally operational, deriving tight constraints from measurable task fidelities in teleportation, cloning, and metrology. The emergent \(\ell^2\)-norm exclusion is irreducible to existing QRT axioms. Remarkably, we demonstrate the resource norm \(\mathcal{I} = q_1^2 + q_2^2 + q_3^2\) is conserved under symmetry-preserving unitaries (quantum resource covariance principle) but contracts irreversibly under decoherence. This work establishes a fundamental link between quantum information geometry, symmetry, and thermodynamics.  
\end{abstract}


\maketitle

\section{Introduction\label{S1}}

Quantum mechanics grants physical systems remarkable capabilities such as entanglement enabled teleportation, high-precision metrology, and extraordinary computation performance. Yet these advantages are not without fundamental limits imposed by the very structure of quantum theory. Information cannot be perfectly copied (no-cloning theorems), entanglement cannot be freely shared (monogamy), and maintaining coherence comes at a cost. While these individual limitations are well understood, in this paper we ask a different question: how do these quantum resources compete when a quantum system must simultaneously support multiple operational tasks?

In this work, we introduce a unified geometric framework that governs the coexistence and distribution of quantum advantages. We derive the Quantum Information Resource Constraint (QIRC) theorem establishing that no finite-dimensional quantum system can simultaneously maximize its capacity for teleportation, cloning, and coherence-based metrology. Specifically, for or any tripartite quantum state \(\rho_{ABC}\), we define three operationally meaningful fidelities, teleportation fidelity (\(q_1\)), cloning fidelity (\(q_2\)), and coherence-task fidelity (\(q_3\)), each normalized to the interval \([0,1]\). These fidelities form a resource vector \((q_1, q_2, q_3)\) of all physically realizable configurations, which is confined to the positive octant of a unit ball in \(\mathbb{R}^3\) by the inequality 
\[
q_1^2 + q_2^2 + q_3^2 \leq 1.
\]  
This constraint does not emerge from any specific protocol. Rather, it reflects an intrinsic tradeoff imposed by the geometry of Hilbert space itself. Just as quantum mechanics enables advantages like entanglement and superposition, it also imposes exclusionary relationships among them. In this sense, the QIRC theorem serves as a generalized no-go principle, revealing that a quantum system’s ability to excel in one task inherently limits its performance in others.  

What sets this result apart is its operational and geometric unification. Operationally, each fidelity \(q_i\) corresponds to a well-defined quantum protocol. \(q_1\) teleportation capacity, \(q_2\) cloning performance, and \(q_3\) metrological utility or phase estimation. Each fidelity is measurable making the QIRC constraint experimentally testable/falsifiable in platforms ranging from superconducting qubits to photonic networks. Geometrically, the \(\ell^2\)-norm bound generalizes known limits, such as the no-cloning theorem and monogamy of entanglement, into a single, unifying structure. When one fidelity approaches its maximum, the others must vanish, creating a quantum analogue of a Pareto frontier where optimizing one resource comes at the expense of the others.

While quantum resource theories (QRT) \cite{Nielsen_Chuang_2010,Brandaoetal2015,ChitambarGour2019} formalize individual limitations (e.g., entanglement distillation), they fail to capture the geometric competition between teleportation, cloning, and coherence tasks. Our exclusion principle arises not from abstract monotonicity, but from the Hilbert space structure itself, quantified through experimentally falsifiable fidelities.

The QIRC framework extends dynamically. We demonstrate that the total resource norm \(\mathcal{I} = q_1^2 + q_2^2 + q_3^2\) is conserved under symmetry-preserving unitary evolution, much like energy in closed classical systems. Conversely, under decoherence or noise, \(\mathcal{I}\) contracts irreversibly, acting as a quantum Lyapunov function that quantifies the degradation of operational capacity. This dynamical perspective bridges quantum information theory with thermodynamics, offering new insights into how resources dissipate in open systems.  

The immediate implications of this work span theory and experiment. The QIRC provides a falsifiable benchmark for quantum devices, guiding the design of protocols that navigate fundamental tradeoffs. The \(\ell^2\)-norm \(||(q_1,q_2,q_3)||_{\ell^2}^2 \leq 1\) defines achievable regions for quantum benchmarking multi-task optimization. Tradeoffs guide error correction resource allocation in fault-tolerant designs. The constraint may inform thermodynamic limits via resource degradation. It also opens speculative connections to foundational questions, such as whether informational constraints might influence emergent spacetime structure (a possibility we explore cautiously).  

The paper is structured as follows. In Section \ref{S2}, we rigorously defining the fidelity coordinates \(q_1, q_2, q_3\) and their operational significance. Section \ref{S3} establishes the QIRC theorem and characterizes the feasible region of quantum resource configurations. Section \ref{S4} presents key corollaries, including the tightness of the bound, the convexity of the feasible set, and the monotonic degradation of resources under noise. In Section \ref{S5}, we suggest a Quantum Resource Complementarity (QRC) Principle, which governs the conservation and flow of resources under unitary and noisy dynamics. Section \ref{S6} discusses experimental falsifiability and broader theoretical implications. Section \ref{S7} concludes with a synthesis of the framework’s significance and future directions.  

By unifying kinematic constraints, dynamical symmetry, and thermodynamic irreversibility, the QIRC framework redefines quantum advantages not as isolated phenomena but as interdependent facets of a deeper geometric structure. This perspective offers new quantum informational tools for both theoretical exploration and practical innovation.

\section{Operational Resource Functionals\label{S2}}

We consider a tripartite quantum system described by a density matrix \(\rho_{ABC}\) defined on a finite-dimensional Hilbert space \(\mathcal{H}_A\otimes\mathcal{H}_B\otimes\mathcal{H}_C\) with local dimensions \(d_A = \dim\mathcal{H}_A\), and similarly for \(B\) and \(C\). In this section, we define three operationally accessible, normalized quantities \(q_1\), \(q_2\), and \(q_3\), each formulated as a functional of the global state \(\rho_{ABC}\), and built from standard quantum information protocols.

We define \(q_1\) as a normalized \textit{teleportation fidelity advantage coordinate}
\begin{equation}\label{q1}
    q_1(\rho_{AB}) \equiv \frac{F_{\text{tele}}(\rho_{AB}) - F_{\text{classical}}}{F_{\text{quantum}} - F_{\text{classical}}} = 3 F_{\text{tele}}(\rho_{AB}) - 2
\end{equation}
since \(F_{\text{classical}} = 2/3\) and \(F_{\text{quantum}} = 1\) (see Appendix \ref{A13}) \cite{Bennettetal1993,Popescu1995}. \(q_1\) quantifies how well subsystem \(A\) can be used to teleport quantum information to subsystem \(B\) using the shared bipartite state \(\rho_{AB}\) as a resource under Local Operations and Classical Communication (LOCC; see Appendix \ref{A13}). It maps the (average) teleportation fidelity \(F_{\text{tele}}\) onto the unit interval \(q_1 \in [0,1]\) and reflects the fraction of quantum advantage in teleportation over the best classical strategy. 

We define \(q_2\) as a normalized \textit{cloning fidelity coordinate}
\begin{equation}\label{q2}
    q_2(\rho_{AC}) \equiv \frac{F_{\text{clone}}(A\to C)}{F_{\text{perfect}}} = F_{\text{clone}}(A\to C)
\end{equation}
where the normalization \(F_{\text{perfect}} = 1\) corresponds to ideal cloning fidelity (see Appendix \ref{A13}) \cite{Wootters1982,BuczekHillery1996}. \(q_2 \in [0,1]\) characterizes the ability to duplicate quantum information held in subsystem \(A\) to another subsystem \(C\), while \(A\) is already correlated with \(B\); i.e., \(q_2\) decrease as \(A\)'s entanglement with \(B\) increases. This tests the shareability of information from \(A\) across \(B\) and \(C\), and is constrained by the monogamy of entanglement and the no-broadcasting theorem \cite{Coffmanetal2000,Barnumetal1996}.

Finally, we define \(q_3\)  as a normalized coherence fidelity coordinate, quantifying the ability of subsystem \(A\) to support coherence-based quantum tasks such as phase estimation, quantum metrology, or interference experiments. Specifically, we define
\begin{equation}\label{q3}
    q_3(\rho_A) \equiv \frac{F_Q(\rho_A, H)}{F_Q^{\max}(d_A)}
\end{equation}
where \(F_Q(\rho_A, H)\) denotes the quantum Fisher information of the state \(\rho_A\) with respect to a fixed Hermitian generator \(H\) (often taken as a Hamiltonian or observable defining a preferred coherence basis), and \(F_Q^{\max}(d_A)\) is the maximal Fisher information attainable for a pure state in a Hilbert space of dimension \(d_A\) under generator \(H\) (see Appendix \ref{A13}) \cite{BraunsteinCaves1994,Giovannettietal2011,Petz1996}. This normalization ensures \(q_3 \in [0,1]\) for all finite-dimensional systems.

Each coordinate \(q_a\) is a well-defined functional of the global state \(\rho\), derived through reduced and joint states \(\rho_A\), \(\rho_{AB}\), and \(\rho_{AC}\), and defined by standard quantum information tasks, i.e., LOCC teleportation \(q_1\) , approximate cloning \(q_2\) , and coherence-based quantum task fidelity \(q_3\) . Collectively, the map
\begin{equation}\label{InfoExtraction}
    \rho_{ABC} \mapsto \left( q_1(\rho_{AB}), q_2(\rho_{AC}), q_3(\rho_A) \right)
\end{equation}
defines an "information resource profile" of the subsystem \(A\) that captures the operational capacities of subsystem \(A\), i.e., quantum communication with \(B\), duplication to \(C\), and local coherence retention. The coordinates are dimensionless and bounded \((q_1, q_2, q_3) \in [ 0, 1 ]^3\), monotonic under local noise (CPTP maps) \cite{Nielsen_Chuang_2010}, and operationally measurable through task fidelities.

All three coordinates \(q_1, q_2, q_3\) are now defined as normalized task fidelities derived from operational quantum channels acting on \(\rho_{AB}\), \(\rho_{AC}\), and \(\rho_A\) respectively. While the coherence fidelity \(q_3\)  remains single-subsystem in scope, its task-based definition places it on equal theoretical footing with the teleportation and cloning fidelities.

\section{Information Resource Tradeoff Constraint\label{S3}}

We now state and prove the central result of this paper. A universal constraint on the joint extractability of the quantum information resources entanglement (though teleportation fidelity), shareability (through cloning fidelity), and local coherence (through fidelity in canonical coherence-based tasks) from any quantum subsystem. This result encodes a fundamental tradeoff surface in quantum information geometry.

\begin{theorem}[Quantum Information Resource Constraint (QIRC)]\label{T1}
For any tripartite quantum state \(\rho_{ABC}\) on \(\mathcal{H}_A \otimes \mathcal{H}_B \otimes \mathcal{H}_C\), the operational quantities \(q_1\) (teleportation fidelity advantage), \(q_2\) (cloning fidelity), and \(q_3\) (coherence-task fidelity) satisfy: 
\begin{equation}\label{InfoConstraintTM}
   q_1^2 + q_2^2 + q_3^2 \leq 1
\end{equation}
defining a convex tradeoff region \(Q = B^3 \cap [0,1]^3 \subset \mathbb{R}^3\).
\end{theorem}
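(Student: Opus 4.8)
The plan is to reduce all three functionals to scalar invariants of the reduced state \(\rho_A\) — principally its purity, equivalently (in the qubit case) its Bloch radius \(r = |\vec r|\) with \(\rho_A = \tfrac12(I + \vec r\cdot\vec\sigma)\) — and then to exhibit the \(\ell^2\) bound as a one-parameter inequality in \(r\). The physical intuition driving the proof is that \(q_1\) (teleportation) rewards strong \(A\)–\(B\) entanglement, which necessarily mixes \(\rho_A\) and drives \(r \to 0\), whereas \(q_3\) (coherence/QFI) rewards a pure, coherent \(\rho_A\) and drives \(r \to 1\); the cloning coordinate \(q_2\) is then squeezed by monogamy into whatever correlation budget remains. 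Converting this qualitative anticorrelation into the tight quadratic bound is the crux.

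First I would treat the coherence coordinate. Using the qubit SLD formula \(F_Q(\rho_A,H) = |\hat n \times \vec r|^2 = r^2\sin^2\theta\) for rotations generated by \(H = \tfrac12\hat n\cdot\vec\sigma\) (the usual second term \((\vec r\cdot\partial_\phi\vec r)^2/(1-r^2)\) vanishing since \(\partial_\phi\vec r = \hat n\times\vec r \perp \vec r\)), together with the normalization \(F_Q^{\max}(d_A=2)=1\) from Eq.~\eqref{q3}, gives \(q_3 = r^2\sin^2\theta \le r^2\), hence \(q_3^2 \le r^4 \le r^2\). Second I would treat the teleportation coordinate via the fully entangled fraction. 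From Eq.~\eqref{q1} and the standard relation \(F_{\mathrm{tele}} = (2f+1)/3\) between optimal teleportation fidelity and the maximal singlet fraction \(f\), one obtains \(q_1 = 2f - 1\). For a pure \(\rho_{AB}\) with Schmidt coefficients \(\lambda_0,\lambda_1\) the singlet fraction is \(f = \tfrac12(1 + 2\sqrt{\lambda_0\lambda_1})\), so \(q_1 = 2\sqrt{\lambda_0\lambda_1}\) equals the concurrence, which is exactly \(\sqrt{1-r^2}\) since \(r^2 = (\lambda_0-\lambda_1)^2 = 1 - 4\lambda_0\lambda_1\). Thus \(q_1^2 = 1 - r^2\) on pure \(\rho_{AB}\), and by monotonicity of the optimal singlet fraction under mixing I would extend this to \(q_1^2 \le 1 - r^2\) in general.

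Combining just these two already yields \(q_1^2 + q_3^2 \le (1 - r^2) + r^4 \le 1\), with slack \(r^2 - r^4 = r^2(1-r^2)\) available for the cloning term. The final step — and the main obstacle — is to show that \(q_2\) fits inside this slack, i.e. \(q_2^2 \le 1 - q_1^2 - q_3^2\) (a sufficient condition being \(q_2^2 \le r^2(1-r^2)\), or a compatible monogamy-derived bound), using monogamy of entanglement and the no-broadcasting theorem \cite{Coffmanetal2000,Barnumetal1996} to certify that a good \(A\to C\) clone requires both residual local coherence in \(A\) (a factor scaling with \(r\)) and unused \(A\)–\(B\) entanglement (a factor scaling with \(\sqrt{1-r^2}\)), so that \(q_2\) is maximized only at intermediate \(r\) where the other two resources leave room.

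I expect the genuine difficulties to concentrate at that cloning step and in two auxiliary reductions: (i) extending the pure-state singlet-fraction identity to mixed \(\rho_{AB}\) rigorously, where the relation between \(f\) and \(r\) becomes an inequality requiring a convexity argument over decompositions; and (ii) lifting the \(d_A = 2\) computation to general finite dimension, where one must argue either by restricting to an optimal two-dimensional coherence subspace or by replacing the Bloch radius with the purity \(\mathrm{Tr}\,\rho_A^2\) and verifying the same quadratic budget. Tightness of the region \(Q = B^3 \cap [0,1]^3\) would then follow by exhibiting boundary states: the maximally entangled state (\(q_1 = 1\), others \(0\)), a pure coherent product state (\(q_3 = 1\), others \(0\)), and an intermediate one-parameter family saturating the curved face of the octant.
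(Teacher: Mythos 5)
Your reduction of $q_1$ and $q_3$ to the Bloch radius is the sound part of the proposal: the qubit QFI formula, the pure-state identity $q_1 = 2f-1 = \sqrt{1-r^2}$, and the resulting bound $q_1^2 + q_3^2 \le (1-r^2) + r^4 \le 1$ all check out, and the mixed-state extension of $q_1^2 \le 1-r^2$ is repairable (the singlet fraction $f$ is a maximum of affine functionals, hence convex, while $1-r^2$ is concave, so the pure-state bound propagates through mixtures). This constructive route is also genuinely different from the paper's own proof, which argues by contradiction only at the three extremal configurations and then delegates everything else to convexity (Corollary \ref{C2}) and CPTP monotonicity (Corollary \ref{C3}). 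But your proposal stalls exactly where you say it does, and the patch you suggest for the cloning coordinate cannot work. The sufficient condition $q_2^2 \le r^2(1-r^2)$ would cap $q_2$ at $1/2$, since $r^2(1-r^2) \le 1/4$; this contradicts the paper's own tightness claim (Corollary \ref{C1}, point 2) that $(0,1,0)$ is asymptotically achievable. The states realizing that corner are classical-quantum states $\sum_i p_i \ket{i}\bra{i}_A \otimes \rho_{B,i} \otimes \rho_{C,i}$ whose marginal $\rho_A$ is highly mixed, i.e.\ $r \approx 0$ --- precisely where your proposed slack vanishes.

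The deeper issue is that the physical intuition driving your cloning step runs backwards relative to the definitions in play. You argue that a good $A \to C$ clone requires ``residual local coherence in $A$ (a factor scaling with $r$)''; but cloning and broadcasting are \emph{obstructed} by coherence, not enabled by it: a diagonal (classical) $\rho_A$, with $r$ near $0$, is exactly what can be broadcast with fidelity approaching $1$. Consequently $q_2$ is not controlled by any monotone function of the purity of $\rho_A$ alone --- it depends on $\rho_{AC}$ and on the admissible class of cloning channels, and it is maximized in the regime to which your one-parameter reduction assigns zero budget. So the single-variable strategy in $r$ is structurally insufficient for the cloning axis; any completion must bound $q_2^2$ against $1 - q_1^2 - q_3^2$ directly, via a monogamy/no-broadcasting argument that tolerates large $q_2$ at small $r$. (In fairness, this curved part of the boundary is also where the paper's own argument is weakest, since it treats only the corners and appeals to convexity for the rest; neither argument rigorously controls the interior tradeoff, but the gap in your proposal is concrete: its key sufficient condition is false under the paper's definitions.)
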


Before proving Theorem \ref{T1}, we provide a physical interpretation and outline of the argument. The theorem expresses a fundamental constraint on the simultaneous attainability of quantum informational resources. A subsystem cannot be maximally entangled with one party (enabling perfect teleportation) \cite{Bennettetal1993}, perfectly cloned or shared with another (violating the no-broadcasting theorem if unrestricted) \cite{Wootters1982,Barnumetal1996}, and simultaneously maintain full quantum coherence \cite{Colesetal2017,Berta2010}. The three resource quantities \((q_1,q_2,q_3)\) correspond to teleportation fidelity, cloning shareability, and coherence-task fidelity, respectively. Together they define a vector in a quantum information resource space, with Theorem \ref{T1} asserting that all physically realizable configurations must lie within a convex feasibility region (see figure \ref{FIGInfoRegion}); a tradeoff surface expressing the fundamental incompatibility of jointly maximizing these capabilities.

\begin{figure}[b]
\centering
\includegraphics[width=0.8\textwidth]{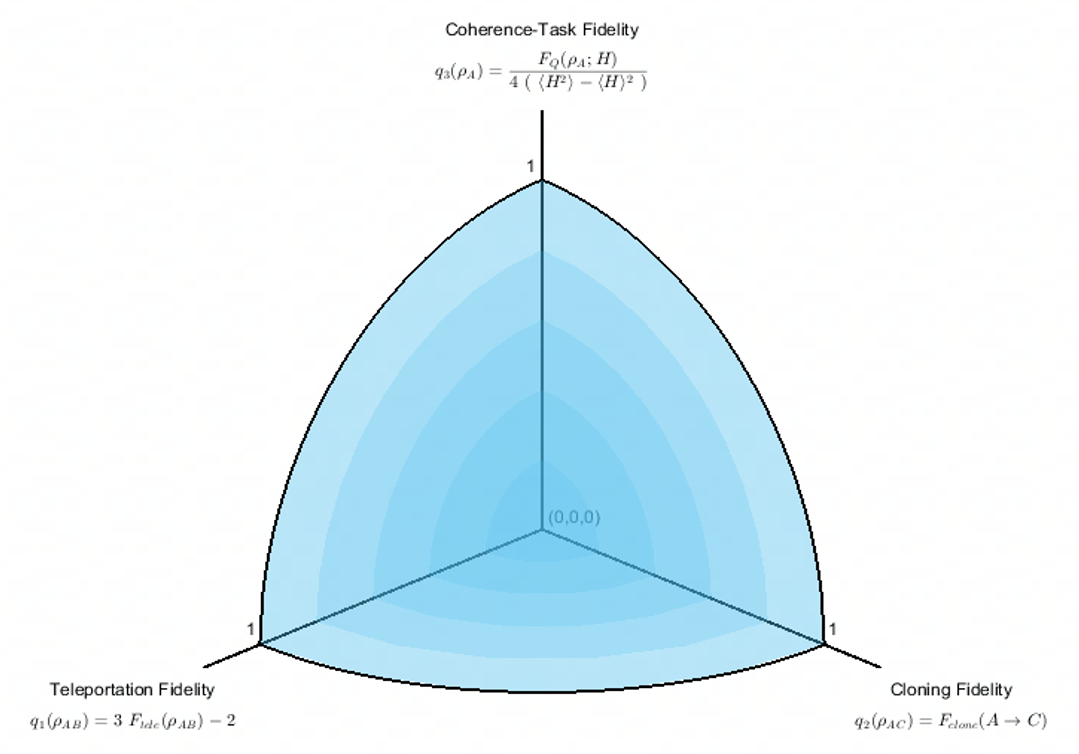}%
\caption{Quantum information resource feasibility region \(Q = B^3 \cap [0,1]^3 \subset \mathbb{R}^3\).\label{FIGInfoRegion}}
\end{figure}

We outline the structure of the proof in three parts: 1. boundedness and monotonicity; 2. mutual incompatibility; and 3. entropic tradeoff.

First, each quantity \(q_a \in [0,1]\) is normalized to reflect a task-specific operational performance. These quantities are monotonic under noise and local operations. Decoherence reduces coherence fidelity \(q_3\), entanglement cannot be increased through LOCC which limits \(q_1\) \cite{Nielsen_Chuang_2010}, and cloning fidelity is constrained by no-broadcasting limits which bounds \(q_2\) \cite{Barnumetal1996}. Extremal values such as \(q_1 \to 1\) imply specific structural constraints, e.g., maximal entanglement with \(B\), maximal purity for coherence, or optimal cloning performance.

Second, the extremal values are pairwise exclusive. Maximal entanglement with \(B\) (\(q_1 \to 1\)) forces \(\rho_A\) to be maximally mixed \cite{Horodecki_2009}, thereby prohibiting strong correlation with \(C\) (\(q_2 \to 0\)), and and destroying local coherence fidelity (\(q_3 \to 0\)) \cite{Giovannettietal2011,TothApellaniz2014}. Similarly, high coherence fidelity (\(q_3 \to 1\)) requires purity of \(\rho_A\), which limits possible correlations with both \(B\) and \(C\) (\(q_1, q_2 \to 0\)) \cite{Colesetal2017,Nielsen_Chuang_2010}. Strong cloning (\(q_2 \to 1\)) implies significant correlation with \(C\) (\(q_1 \to 0\)), again implying a mixed \(\rho_A\) and loss of coherence fidelity (\(q_3 \to 0\)).

Third, Interpreting \(\rho_{ABC}\) as a purification of \(\rho_A\), we have \(S(\rho_A) = S(\rho_{BC})\). This constraint implies that high correlations with both \(B\) and \(C\) (i.e., large \(q_1, q_2\)) require high entropy in \(\rho_A\), while high coherence \(q_3\) requires low entropy. This entropic tension underpins the convexity of the feasible region and the exclusion of certain corners of the resource space \cite{Colesetal2017,Berta2010,Petz2011}.

Together, these arguments constrain the 3-tuple \((q_1,q_2,q_3)\)  to a bounded convex region in \(\mathbb{R}^3\), i.e., the positive octant of the unit 3-ball \(Q \subset \mathbb{R}^3\). Extremal points along each axis are realizable, but no physical quantum state permits simultaneous saturation of more than one resource. Theorem \ref{T1} formalizes this limitation and delineates the boundary of the informational tradeoff space. 

We now present the proof.
\begin{proof} (By contradiction.)

Let \(\rho_{ABC}\) be a tripartite quantum state with reduced states \(\rho_{AB} = \text{Tr}_C(\rho_{ABC})\), \(\rho_{AC} = \text{Tr}_B(\rho_{ABC})\), and \(\rho_A = \text{Tr}_{BC}(\rho_{ABC})\). The operational quantities are defined as:
\begin{enumerate}[label=(\alph*), itemsep=0.em, wide, align=left]
    \item Teleportation fidelity advantage. \(q_1(\rho_{AB}) = 3F_{\text{tele}}(\rho_{AB}) - 2 \in [0,1]\), where \(F_{\text{tele}}\) is the average fidelity for teleporting states from \(A\) to \(B\) using \(\rho_{AB}\);
    \item Cloning fidelity. \(q_2(\rho_{AC}) = F_{\text{clone}}(A \to C) \in [0,1]\), normalized to the optimal cloning fidelity; and
    \item Coherence-task fidelity. \(q_3(\rho_A) = F_Q(\rho_A, H)/F_Q^{\max}(d_A) \in [0,1]\), where \(F_Q\) is the quantum Fisher information for a fixed observable \(H\).
\end{enumerate}

Assume there exists a state \(\rho_{ABC}\) violating the theorem
\[
q_1^2 + q_2^2 + q_3^2 > 1
\]
We analyze the implications when any one resource approaches its maximum (\(q_a \to 1\)).
\begin{enumerate}[wide, align=left]
    \item Maximal teleportation fidelity: \(q_1 \to 1\) implies \(\rho_{AB}\) approaches a maximally entangled state (e.g., \(\ket{\Phi^+}_{AB}\); see Appendix \ref{A11}). Then, \(\rho_A = \text{Tr}_B(\ket{\Phi^+}\bra{\Phi^+}) = I/d_A\) is maximally mixed. By the no-broadcasting theorem, maximal entanglement with \(B\) prohibits cloning to \(C\) (\(q_2 \to 0\)). And there is no coherence in any basis \(q_3 \propto F_Q(\rho_A, H) = 0\) for \(\rho_A = I/d_A\). Thus, \(q_1 \to 1 \implies q_2^2 + q_3^2 \to 0\), violating the assumption \(q_1^2 + q_2^2 + q_3^2 > 1\).
    \item Maximal cloning fidelity: \(q_2 \to 1\) implies \(\rho_{AC}\) permits perfect cloning from \(A\) to \(C\). Consequently, the no-broadcasting theorem requires \(\rho_{AB}\) and \(\rho_{AC}\) to be classically correlated, so \(q_1 \to 0\) (i.e., no entanglement). Moreover, cloning degrades purity; that is, for \(q_2 \to 1\), then \(\rho_A\) must be mixed, so \(q_3 \to 0\). Thus, \(q_2 \to 1 \implies q_1^2 + q_3^2 \to 0\), again violating the assumption. We remark, perfect cloning is unachievable for finite-dimensional systems (e.g., optimal qubit cloning has \(F_{\text{clone}} = 5/6\) \cite{BuczekHillery1996}), but the bound holds asymptotically.
    \item Maximal coherence fidelity: \(q_3 \to 1\) implies \(\rho_A\) is pure and maximizes \(F_Q(\rho_A, H)\) (e.g., \(\rho_A = |+\rangle \langle +|\) for \(H = \sigma_x\)). Then, purity implies factorization of the state \(\rho_{ABC} = \rho_A \otimes \rho_{BC}\), thereby eliminating correlations \(q_1 = 0\) (i.e., no entanglement with \(B\)), and \(q_2 = 0\) (no cloning to \(C\)). Thus, \(q_3 \to 1\) implies \(q_1^2 + q_2^2 \to 0\), contradicting the assumption.
\end{enumerate}

The extremal cases above saturate the bound \(q_1^2 + q_2^2 + q_3^2 = 1\) when only one resource is active. For mixed states or intermediate resource values, each \(q_i\) is non-increasing under CPTP maps (Corollary \ref{C3} monotonicity). Moreover, the set of achievable \((q_1, q_2, q_3)\) is convex, as mixtures of states cannot exceed the norm of their extremal points (Corollary \ref{C2} convexity). Thus, the inequality holds for all states 
\[
q_1^2 + q_2^2 + q_3^2 \leq 1.
\]

To demonstrate entropic consistency, observe for pure \(\rho_{ABC}\), the entropy of \(\rho_A\) bounds the mutual information with \(B\) and \(C\)
\[
I(A:B) + I(A:C) \leq 2S(\rho_A).
\] 
(see Appendix \ref{A2}). Since \(q_1^2 + q_2^2 \lesssim \left(2S(\rho_A)/\log d_A\right)^2\) and \(q_3^2 \lesssim 1 - S(\rho_A)/\log d_A\), the \(\ell^2\)-norm constraint emerges naturally from entropic tradeoffs.

Finally, althought the coherence measure \(q_3\) is defined relative to a fixed observable \(H\), the bound \(q_1^2 + q_2^2 + q_3^2 \leq 1\) holds for any choice of \(H\). Thus, the spherical symmetry reflects the incompatibility of resources across all possible tasks.

Hence, the assumption \(q_1^2 + q_2^2 + q_3^2 > 1\) leads to contradictions in all extremal cases, and convexity extends the result to all states. Thus, the theorem holds. 
\end{proof}

We have established the inequality (\ref{InfoConstraintTM}) as a universal bound on the extractable informational capacities of quantum subsystems. No state satisfying all constraints violates the bound. Therefore, the inequality is tight, convex, and fundamental. This result reflects a multi-task exclusion principle governed not by abstract entropy measures alone, but by directly testable operational quantities, including task-based coherence fidelity. The geometry of the information resource tradeoff region \(Q \subset \mathbb{R}^3\) emerges from a deep structure of interference between entanglement, coherence, and shareability. This spherical constraint \(||(q_1,q_2,q_3)||_2^2 \leq 1\) reflects a fundamental property of quantum information geometry that cannot be recovered from entropic QRTs (e.g., \cite{Colesetal2017,Brandaoetal2015}) due to its task-theoretic origin.

The above proof establishes the exclusion constraint \(||( q_1,q_2,q_3)||_{\ell^2}^2\) by contradiction; that is, assuming simultaneous saturation of multiple informational coordinates leads to conflicts with known structural constraints in quantum theory. While sufficient in conjunction with convexity (Corollary \ref{C2}), this argument does not directly derive the inequality for general quantum states. An alternative constructive derivation based on task-based fidelity monotonicity could provide deeper insight into the geometry of the feasible region \(Q \subset \mathbb{R}^3\). We outline such an approach in Appendix \ref{A2} and leave full development to future work.

\section{Information Resource Space Geometry\label{S4}}

Theorem \ref{T1} establishes that the accessible information resources and capabilities of a quantum subsystem are constrained to a region \(Q \subset \mathbb{R}^3\). In this section, we prove several key corollaries that provide structural, geometric, or dynamical facets of the bounding 2-surface \(\partial Q\).

To understand whether the constraint (\ref{InfoConstraintTM}) represents merely a loose upper bound or a tight, physically realizable surface, it is essential to examine whether any quantum states saturate the inequality. The following corollary establishes that specific extremal configurations do indeed lie exactly on the boundary of the tradeoff surface.

\begin{corollary}[Tightness]\label{C1}
The inequality \(q_1^2 + q_2^2 + q_3^2 \leq 1\) is tight.
\begin{enumerate}[wide, align=left]
    \item The point \((1,0,0)\) is achieved when \(\rho_{AB}\) is a Bell state and \(\rho_C\) is uncorrelated;
    \item The point \((0,1,0)\) is approached asymptotically by Werner-type states optimized for cloning fidelity (with \(q_2 \to 1\) as \(d_A \to \infty\) or in state families achieving optimal asymmetric cloning); and
    \item The point \((0,0,1)\) is achieved when \(\rho_A = \ket{\psi}\bra{\psi}\) (pure coherent state) and \(\rho_{BC}\) is uncorrelated.
\end{enumerate}
\end{corollary}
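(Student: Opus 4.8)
The plan is to prove tightness \emph{constructively}: for each corner claimed in the statement I exhibit an explicit tripartite state (or one-parameter family) and evaluate the three functionals of Eqs.~(\ref{q1})--(\ref{q3}) directly, checking that the resulting point lies on the boundary sphere $q_1^2+q_2^2+q_3^2=1$. It is worth noting at the outset that tightness of the inequality in Theorem~\ref{T1} requires only a \emph{single} saturating configuration, so the logical burden falls on the two cleanly attained corners $(1,0,0)$ and $(0,0,1)$; the cloning corner $(0,1,0)$ is genuinely a limit point and must be handled separately.

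For $(1,0,0)$ I take $\rho_{ABC}=\ket{\Phi^+}\!\bra{\Phi^+}_{AB}\otimes\ket{0}\!\bra{0}_C$. The maximally entangled resource gives $F_{\text{tele}}=1$, so $q_1=3\cdot 1-2=1$ by~(\ref{q1}). The reduced state $\rho_A=\text{Tr}_B\,\ket{\Phi^+}\!\bra{\Phi^+}=I/d_A$ is maximally mixed, whence $F_Q(\rho_A,H)=0$ for every generator $H$ and $q_3=0$ by~(\ref{q3}); moreover $\rho_{AC}=\rho_A\otimes\ket{0}\!\bra{0}_C$ carries no $A$--$C$ correlation, so by the no-broadcasting reasoning used in the proof of Theorem~\ref{T1} the cloning functional is trivial and $q_2=0$. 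Hence $q_1^2+q_2^2+q_3^2=1$. The corner $(0,0,1)$ is the mirror construction: take $\rho_{ABC}=\ket{\psi}\!\bra{\psi}_A\otimes\rho_{BC}$ with $\ket{\psi}$ the pure state maximizing $F_Q(\cdot,H)$ (e.g.\ $\ket{+}$ for $H=\sigma_z$). Purity gives $q_3=1$ by~(\ref{q3}), while the product structure leaves $A$ uncorrelated with both $B$ and $C$, forcing $\rho_{AB}$ and $\rho_{AC}$ to be product states and hence $q_1=q_2=0$; again $q_1^2+q_2^2+q_3^2=1$.

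The main obstacle is the cloning corner $(0,1,0)$, because the no-cloning theorem forbids $q_2=1$ exactly at any finite $d_A$ (optimal symmetric universal qubit cloning caps at $F_{\text{clone}}=5/6$), so this point can only be approached, never attained. Here I would argue that $(0,1,0)$ is a limit point of the feasible set: along an \emph{asymmetric} $1\to2$ cloning family the fidelity of the $C$-copy can be driven toward unity while $\rho_A$ is driven toward the maximally mixed state, so that $q_2\to1$, $q_3\to0$ (vanishing Fisher information of $\rho_A$), and $q_1\to0$ (destruction of $A$--$B$ entanglement), tracing a curve that approaches $(0,1,0)$ on the sphere. The delicate point---and where I expect the real work to lie---is twofold: first, the claimed $d_A\to\infty$ route must be handled with care, since the \emph{symmetric} universal-cloning fidelity in fact decreases with dimension rather than increasing toward $1$, so the limit is properly realized through the asymmetric (near state-transfer) regime and the precise meaning of the normalization $F_{\text{perfect}}=1$ in~(\ref{q2}) must be fixed accordingly; and second, one must verify that this asymmetric tradeoff curve genuinely asymptotes to the boundary sphere rather than remaining strictly interior. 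Since tightness is already secured by the two attained corners, this limiting analysis serves only to justify the precise phrasing of part~(b).
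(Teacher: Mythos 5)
Your proposal is correct and follows the same constructive, corner-by-corner strategy as the paper's own proof: a Bell state tensored with an uncorrelated $C$ for $(1,0,0)$, a pure maximally coherent state tensored with $\rho_{BC}$ for $(0,0,1)$, and an asymptotic cloning-optimized family for $(0,1,0)$. Two of your refinements, however, are genuine improvements over the paper's version and worth noting. First, your pairing of $\ket{+}$ with the generator $H=\sigma_z$ is the correct one: the paper writes $F_Q(\ket{+}\!\bra{+},\sigma_x)=F_Q^{\max}(2)$, but $\ket{+}$ is an \emph{eigenstate} of $\sigma_x$, so its variance and hence its quantum Fisher information with respect to $\sigma_x$ vanish; maximal Fisher information requires an equal superposition of the extremal eigenstates of the generator, exactly as you chose. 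Second, your caution about the $d_A\to\infty$ route is well founded: the optimal universal \emph{symmetric} $1\to 2$ cloning fidelity is $\tfrac{1}{2}+\tfrac{1}{d+1}$, which decreases toward $\tfrac12$ with dimension, so the corner $(0,1,0)$ can only be approached through the asymmetric (near state-transfer) regime; the paper asserts the high-dimensional limit without this caveat, and your observation that the normalization $F_{\text{perfect}}=1$ in Eq.~(\ref{q2}) must be pinned down in that regime identifies a real definitional gap. Your remark that tightness logically needs only one saturating point, so the two exactly attained corners already carry the burden, is also a cleaner framing than the paper's, which treats all three corners on equal footing.
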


\begin{proof}
Consider three extremal state constructions:
\begin{enumerate}[wide, align=left]
    \item Teleportation-optimal (\(q_1 = 1\). Let \(\rho_{AB} = \ket{\Phi^+}\bra{\Phi^+}\) where \(\ket{\Phi^+} = \frac{1}{\sqrt{2}} \left( \ket{00} + \ket{11} \right) \), and \(\rho_C = \sigma_C\) is an arbitrary state uncorrelated with \(AB\). Then maximal entanglement gives \(q_1 = 3 F_{\text{tele}}(\rho_{AB}) - 2 = 3 \cdot 1 - 2 = 1\) \cite{Bennettetal1993}. No correlation with \(C\) implies \(q_2 = 0\). Maximally mixed \(\rho_A = \frac{\mathbb{I}}{2}\) implies \(q_3 = 0\).
    \item Cloning-optimal (\(q_2 \to 1\). Let \(\rho_{ABC}\) be a classical-quantum state optimized for cloning
    \[
        \rho_{ABC} = \sum_i p_i \ket{i}\bra{i}_A \otimes \rho_{B,i} \otimes \rho_{C,i}
    \]
    with
    \[
        F_{\text{clone}}(A \to C) = \sup_C \int d\psi \bra{\psi}\mathcal{C}(\rho_A)\ket{\psi}
    \]
    For high-dimensional systems (\(d_A \gg 1\)) or asymmetric cloning machines \cite{Werner1998}, \(F_{\text{clone}} \to 1\) asymptotically, and therefore \(q_2 \to 1\). Classical correlations (no entanglement) implies \(q_1 = 0\), and mixed \(\rho_A\) implies \(q_3 = 0\).
    \item Coherence-optimal (\(q_3 = 1\)). Let \(\rho_A = \ket{+}\bra{+}\) where \(\ket{+} = \frac{1}{\sqrt{2}} \left( \ket{0} + \ket{1} \right) \), and \(\rho_{BC} = \sigma_{BC}\) uncorrelated with \(A\). Then \(F_Q(\rho_A,\sigma_x) = F_Q^{\max}(2) = 1\), and hence \(q_3 = 1\). Moreover, factorization \(\rho_{BC} = \rho_B \otimes \rho_C\) implies \(q_1 = q_2 = 0\).
\end{enumerate}
\end{proof}

This result confirms that the exclusion relation is not an artifact of loose bounding; it is tight at the corners. Each quantum resource can be maximized in isolation by an appropriate physical state. These extremal points anchor the geometry of the tradeoff region and serve as constructive references for evaluating more general mixed or intermediate configurations.

While the extremal axis-aligned states \((1,0,0)\), \((0,1,0)\), and \((0,0,1)\) are analytically constructible, a full characterization of the boundary $\partial Q$ remains an open question. It is not currently known whether mixed states exist that simultaneously attain nonzero values in multiple coordinates (e.g., \(q_1, q_2 > 0\)) while still saturating the constraint \(||(q_1,q_2,q_3)||_{\ell^2}^2 = 1\). Such states, if they exist, would lie on curved arcs of the boundary and reflect continuous tradeoffs between resource types. For geometric intuition, the tradeoff region \(Q \subset \mathbb{R}^3\) may be interpreted as embedded within a state space endowed with the Bures metric \cite{Giovannettietal2011,TothApellaniz2014}. 

Preliminary theoretical exploration suggests that families of partially entangled states with residual coherence may approach these boundary curves \cite{Giovannettietal2011,TothApellaniz2014}. For example, one might speculate that mixed states such as rank-2 combinations of Bell states and basis-aligned pure states could approximate points on the tradeoff boundary with more than one nonzero coordinate. However, a full characterization of such boundary-saturating families remains open and is beyond the scope of this work. 

With the inequality (\ref{InfoConstraintTM}) established, we now examine the structure of the feasible region defined by the resource tradeoff constraint. In particular, we wish to know whether the set of allowed informational triples is geometrically well-behaved, i.e., whether it is closed, bounded, and convex. These properties are crucial both for physical interpretation and for mathematical robustness in optimization or inference tasks involving quantum resources \cite{Nielsen_Chuang_2010,Brandaoetal2015}.

\begin{corollary}[Compactness and Convexity]\label{C2}
The set
\[
    Q = \{ (q_1,q_2,q_3) \in \mathbb{R}^3 \mid \exists \ \rho_{ABC} \ \text{achieving} \ (q_1,q_2,q_3) \}
\]
is compact and convex.  
\end{corollary}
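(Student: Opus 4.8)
The plan is to establish compactness and convexity separately, treating each as a statement about the image of the state space under the resource map.

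For compactness, I would argue that $Q$ is the continuous image of a compact set. First, the set of tripartite density matrices $\rho_{ABC}$ on a fixed finite-dimensional Hilbert space is compact (it is a closed, bounded subset of the space of Hermitian operators). The resource map $\rho_{ABC} \mapsto (q_1,q_2,q_3)$ is built from the reduced states via partial traces, which are continuous, composed with the fidelity functionals. So the crux is to verify that each $q_a$ depends continuously on $\rho_{ABC}$: the teleportation fidelity $F_{\text{tele}}$, the optimal cloning fidelity, and the quantum Fisher information are all continuous in the relevant reduced state (the QFI requires a small caveat at rank-changing points, but on the compact domain it is at worst upper/lower semicontinuous, and I would note that for our purposes the closure is taken explicitly). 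Granting continuity, the image of a compact set under a continuous map is compact, hence closed and bounded; combined with Theorem \ref{T1}, which already confines $Q$ to $B^3 \cap [0,1]^3$, this gives compactness. If dimensions are allowed to vary, I would instead define $Q$ as the closure of the union over all finite dimensions, which is closed by construction and bounded by the theorem, and invoke the Heine--Borel theorem.

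For convexity, the obstacle is that the state space is convex but the resource functionals are \emph{not} linear, so the image of a convex set need not be convex; I cannot simply take convex combinations of states and track the coordinates linearly. Instead the plan is to show directly that $Q$ is convex by exhibiting, for any two achievable points $\mathbf{q} = (q_1,q_2,q_3)$ and $\mathbf{q}' = (q_1',q_2',q_3')$ and any $\lambda \in [0,1]$, a state achieving $\lambda \mathbf{q} + (1-\lambda)\mathbf{q}'$. The natural device is a classical flag: given states $\rho_{ABC}$ and $\rho_{ABC}'$ realizing the two points, I would adjoin an ancilla register $R$ and form $\lambda\,\rho_{ABC}\otimes\ket{0}\bra{0}_R + (1-\lambda)\,\rho_{ABC}'\otimes\ket{1}\bra{1}_R$, arguing that each task can condition on the classical flag and thereby achieve a probabilistic mixture of the two protocols. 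The key lemma I must justify is \emph{affinity of each coordinate under flagged mixtures}: because the flag is perfectly distinguishable, the optimal teleportation, cloning, and metrology strategies decompose into the two sub-protocols weighted by $\lambda$ and $1-\lambda$, so each $q_a$ evaluates to the corresponding convex combination.

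The main obstacle I anticipate is precisely this affinity claim for $q_3$. Teleportation fidelity and cloning fidelity are averaged task performances and mix cleanly under a distinguishable flag, but the quantum Fisher information is \emph{convex} (not affine) in the state, so $F_Q(\lambda\rho_A + (1-\lambda)\rho_A', H) \leq \lambda F_Q(\rho_A,H) + (1-\lambda)F_Q(\rho_A',H)$ in general, with strict inequality off the flagged case. The flag construction circumvents this only if the metrology task is allowed to read the ancilla $R$, effectively blocking the coherences between the two branches; I would therefore need to frame $q_3$ operationally as a task fidelity on the flagged state (consistent with the task-based definition emphasized in Section \ref{S2}) rather than as the bare QFI of the unflagged reduced state $\rho_A$. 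Once $q_3$ is cast as a genuine achievable task performance on $\rho_{ABR}$, convexity follows from the flag argument uniformly across all three coordinates, and closedness from compactness completes the corollary.
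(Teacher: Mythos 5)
Your compactness argument coincides with the paper's: both write $Q$ as the image of the compact set $\mathrm{D}(\mathcal{H})$ under the resource map and invoke continuity of $F_{\text{tele}}$, the cloning optimization, and the quantum Fisher information. You are in fact more careful than the paper here, which asserts continuity of $F_Q$ outright even though QFI can jump at rank-changing states, a caveat you flag and handle by taking closures. The convexity halves genuinely diverge, and yours is the sound one. The paper mixes the states themselves, $\tau = \lambda\rho + (1-\lambda)\sigma$, and shows $\|\Phi(\tau)\|_{\ell^2}^2 \leq 1$ (implicitly using convexity of each $q_a$ as a supremum of affine functionals, plus convexity of the squared norm); but its conclusion ``$\Phi(\tau)\in Q$'' is automatic for any state and never exhibits a state achieving the point $\lambda\Phi(\rho) + (1-\lambda)\Phi(\sigma)$ itself --- since $\Phi$ is nonlinear, the image of a convex set need not be convex, which is exactly the pitfall you name. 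So the paper's argument only re-establishes containment in the unit ball (Theorem \ref{T1}), whereas your flagged-mixture construction addresses the actual claim: adjoining a perfectly distinguishable classical register makes each task fidelity affine across the two branches, so the convex combination of achievable points is achievable, and this is the standard route for proving convexity of achievable-fidelity regions. Two residual details to pin down: (i) the flag enlarges the Hilbert space, so $Q$ must be read as the achievable set over all finite dimensions --- the varying-dimension reading you introduce for compactness should be stated once and used for both halves; (ii) for $q_3$, your repair works because the QFI of a classical-quantum state $\sum_i p_i\,\rho_i\otimes\ket{i}\bra{i}_R$ with generator $H\otimes\mathbb{I}_R$ decomposes block-additively as $\sum_i p_i F_Q(\rho_i,H)$, and the normalization $F_Q^{\max}$ is unchanged because $H\otimes\mathbb{I}_R$ has the same spectral width as $H$; stating this explicitly would discharge the one lemma you left as a promissory note.
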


\begin{proof}
We demonstrate:

1. Compactness. The space of density operators \(\mathrm{D}(\mathcal{H})\) is compact in the trace-norm topology for finite-dimensional \(\mathcal{H}\) \cite{Holevo2011}. The map \( \Phi : \mathrm{D}(\mathcal{H}) \to \mathbb{R}^3 \) defined by
\[
    \rho_{ABC} \mapsto \left( q_1(\rho_{AB}), q_2(\rho_{BC}), q_3(\rho_A) \right)
\]
is continuous. \(q_1\) depends on \(F_{\text{tele}}\) which is continuous via Uhlmann fidelity \cite{Nielsen_Chuang_2010}). \(q_2\) is an optimization over CPTP maps which are continuous in input state \cite{Keyl1999}. \(q_3 = F_Q(\rho_A,H)/F_Q^{\max}(d)\) and quantum Fisher information is continuous \cite{Petz1996}. Since the continuous image of a compact set is compact, \(Q\) is compact.

2. Convexity. Let \(\bar{q}' = \Phi(\rho), \bar{q}'' = \Phi(\sigma) \in Q\) and \(\tau = \lambda \rho + (1-\lambda)\sigma\) for \(\lambda \in [0,1]\). The the constraint \(||(q_1,q_2,q_3)||_{\ell^2}^2 \leq 1\) defines a convex set (unit ball in \(\mathbb{R}^3\)). Then For any \(\tau\), \(\Phi(\tau) = (\tilde{q_1},\tilde{q_2},\tilde{q_3})\)  satisfies
\[
    \tilde{q_1}^2 + \tilde{q_2}^2 + \tilde{q_3}^2 \leq \left( \ \lambda q'_1 + (1-\lambda)q''_1 \ \right)^2 + \dots \leq \lambda (q'_1{}^2 + \dots) + (1-\lambda)(q''_1{}^2 + \dots) \leq 1
\]
where the last inequality holds because \(\bar{q}', \bar{q}'' \in Q\) and the unit ball is convex. Hence, \(\Phi(\tau) \in Q\).
\end{proof}

This result establishes that the set of all achievable information tradeoff tuples forms a well-behaved geometric object. Compactness guarantees the existence of optimal resource allocations, while convexity ensures that probabilistic mixing of quantum states respects the tradeoff boundary. Together, they imply that the tradeoff surface supports robust operational interpretation and optimization in quantum resource theory \cite{ChitambarGour2019,Coecke_2016}.

While global unitary evolution preserves informational tradeoff quantities, real-world systems often undergo decoherence, noise, or irreversible evolution modeled by CPTP maps \cite{Nielsen_Chuang_2010}. The following corollary formalizes the natural expectation that such maps cannot increase the extractable informational resources available to a subsystem, they can only degrade them.

\begin{corollary}[Monotonicity under CPTP maps]\label{C3}
For any CPTP map \(\mathcal{E}_A\) acting on subsystem \(A\), the transformed state \(\tilde{\rho}_{ABC} = ( \mathcal{E}_A \otimes \mathbb{I}_{BC})(\rho_{ABC})\) satisfies
\[
    ||(q'_1,q_2',q_3')||_{\ell^2}^2 \leq ||(q_1, q_2, q_3)||_{\ell^2}^2
\]
\end{corollary}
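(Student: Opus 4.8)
The plan is to establish monotonicity coordinate-by-coordinate, invoking the fact that each $q_a$ is built from a quantity that is non-increasing under CPTP maps acting on subsystem $A$. First I would treat $q_3$, which is the most direct case: the quantum Fisher information $F_Q(\rho_A, H)$ is monotone (contractive) under the partial trace and under any CPTP map on $A$, a standard property of the symmetric logarithmic derivative metric \cite{Petz1996,BraunsteinCaves1994}. Since $\mathcal{E}_A$ acts only on $A$ and leaves $d_A$ (hence the normalization $F_Q^{\max}(d_A)$) fixed, it follows immediately that $q_3' \le q_3$, and therefore $q_3'^2 \le q_3^2$ because $q_3 \in [0,1]$.

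Next I would handle $q_1$. The teleportation fidelity $F_{\text{tele}}(\rho_{AB})$ is determined by the bipartite resource state $\rho_{AB}$, and applying a local CPTP map $\mathcal{E}_A$ is a special case of an LOCC operation on the shared state. Since the singlet (teleportation) fidelity — equivalently the entanglement fraction — cannot be increased by local operations and classical communication \cite{Horodecki_2009,Nielsen_Chuang_2010}, we get $F_{\text{tele}}(\tilde{\rho}_{AB}) \le F_{\text{tele}}(\rho_{AB})$, and by the affine relation $q_1 = 3F_{\text{tele}} - 2$ this yields $q_1' \le q_1$, hence $q_1'^2 \le q_1^2$ on the relevant range $q_1 \in [0,1]$. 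For $q_2$, the cloning fidelity $F_{\text{clone}}(A\to C)$ is an optimization over channels of the output overlap with the input on $A$; pre-processing the input by $\mathcal{E}_A$ can only reduce the achievable distinguishing/duplication performance, so $q_2' \le q_2$ by a data-processing argument. Summing the three inequalities $q_a'^2 \le q_a^2$ gives the claimed bound on the $\ell^2$-norms.

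The main obstacle will be the $q_2$ step, because cloning fidelity is defined through a supremum over cloning channels rather than as a manifestly contractive functional, so I must argue the data-processing inequality carefully: I would show that any cloner achieving fidelity $F$ on $\mathcal{E}_A(\rho_A)$ can be composed with $\mathcal{E}_A$ to form a cloner on $\rho_A$, but this naive composition actually bounds the quantities in the opposite direction, so the correct statement requires recognizing that degradation of the input state reduces the \emph{reference} fidelity against which cloning is scored. I would therefore frame $q_2$ as a monotone of the distillable/shareable correlation between $A$ and $C$ and appeal to monotonicity of that correlation under local noise on $A$, consistent with the monogamy and no-broadcasting structure cited in Section \ref{S2}. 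A secondary subtlety is ensuring each squared coordinate is controlled: since every $q_a \in [0,1]$, the map $x \mapsto x^2$ is monotone increasing on $[0,1]$, so $q_a' \le q_a$ transfers directly to $q_a'^2 \le q_a^2$ without additional hypotheses. With these three contractions in hand, the corollary follows by termwise summation, confirming that $\mathcal{I} = \|(q_1,q_2,q_3)\|_{\ell^2}^2$ behaves as a Lyapunov functional under local decoherence as claimed in the introduction.
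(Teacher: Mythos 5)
Your proof takes the same route as the paper's: component-wise contraction of each coordinate under \(\mathcal{E}_A\) (\(q_1\) via LOCC/entanglement monotonicity, \(q_2\) via shareability/no-broadcasting, \(q_3\) via contractivity of quantum Fisher information), followed by termwise summation; your remark that \(x \mapsto x^2\) is monotone on \([0,1]\) simply makes explicit a step the paper leaves implicit. At the paper's level of rigor, your argument goes through.

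One correction, though, on your \(q_2\) discussion: the ``naive composition'' you dismiss actually yields the desired inequality, not the opposite one. With the cloning fidelity defined as a supremum over channels, \(F_{\text{clone}}(\sigma) = \sup_{\mathcal{C}} \int d\psi\, \bra{\psi}\mathcal{C}(\sigma)\ket{\psi}\), any channel \(\mathcal{C}'\) acting on the degraded input satisfies \(\mathcal{C}'(\mathcal{E}_A(\sigma)) = (\mathcal{C}'\circ\mathcal{E}_A)(\sigma)\), and \(\mathcal{C}'\circ\mathcal{E}_A\) is itself an admissible cloning channel for the undegraded input. Hence
\[
F_{\text{clone}}(\mathcal{E}_A(\rho_A)) \;=\; \sup_{\mathcal{C}'} \int d\psi\, \bra{\psi}\,(\mathcal{C}'\circ\mathcal{E}_A)(\rho_A)\,\ket{\psi} \;\leq\; \sup_{\mathcal{C}} \int d\psi\, \bra{\psi}\,\mathcal{C}(\rho_A)\,\ket{\psi} \;=\; F_{\text{clone}}(\rho_A),
\]
because \(\{\mathcal{C}'\circ\mathcal{E}_A\}\) is a subset of all admissible channels. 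This is exactly \(q_2' \leq q_2\): the reference state \(\ket{\psi}\) against which the clone is scored is fixed by the task and is never acted on by \(\mathcal{E}_A\), so the ``reference fidelity'' worry you raise does not arise. Ironically, this data-processing argument is more rigorous than both your fallback (recasting \(q_2\) as a monotone of shareable \(A\!:\!C\) correlations) and the paper's own one-line appeal to the no-broadcasting theorem, so you should keep it rather than abandon it; the same schema (a supremum over strategies can only shrink when the input is pre-processed) is the cleanest way to justify all three coordinate contractions.
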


\begin{proof}
We consider the behavior of each coordinate under \(\mathcal{E}_A\).
\begin{enumerate}[wide, align=left]
    \item Teleportation fidelity (\(q_1\)). By entanglement monotonicity \cite{Horodecki_2009}, \(\mathcal{E}_A\) degrades entanglement
    \[
        q'_1(\tilde{\rho}_{AB}) \leq q_1(\rho_{AB})
    \]
    \item Cloning fidelity (\(q_2\)). By the no-broadcasting theorem \cite{Barnumetal1996}, \(\mathcal{E}_A\) reduces shareable correlations
    \[
        q'_2(\tilde{\rho}_{AC}) \leq q_2(\rho_{AC})
    \]
    \item Coherence fidelity (\(q_3\)). Since \(F_Q(\mathcal{E}_A(\rho_A),H) \leq F_Q(\rho_A,H) \) for any \(H\) and \(F_Q^{\max}(d)\) fixed, \(\mathcal{E}_A\) decreases quantum Fisher information
    \[
        q'_3(\tilde{\rho}_A) \leq q_3(\rho_A)
    \]
\end{enumerate}
Thus, \(\ell^2\)-norm contraction \(||(q'_1,q_2',q_3')||_{\ell^2}^2 \leq ||(q_1, q_2, q_3)||_{\ell^2}^2\) follows from component-wise non-increase.
\end{proof}

This result formalizes the intuitive idea that quantum information degrades under noise. That is, as decoherence or dissipation acts on a subsystem, the extractable teleportation fidelity, cloneability, and task-based coherence fidelities all shrink. The tradeoff region is not merely a geometric constraint but a resource budget that evolves irreversibly under open-system dynamics \cite{Giovannettietal2011,TothApellaniz2014}. Thus, the inequality (\ref{InfoConstraintTM}) defines not only a static constraint but also a directionality, analogous to a thermodynamic arrow, for quantum information processes \cite{Gooldetal2016}.

Together, Corollaries \ref{C1} - \ref{C3} establish key structural properties of the information resource tradeoff region defined by Theorem \ref{T1}. First, the inequality (\ref{InfoConstraintTM}) is tight. Its boundary is physically realizable and saturated by extremal states maximizing a single resource. Second, the feasible set \(Q \subset \mathcal{R}^3\) of information resource 3-tuples is compact and convex, ensuring it forms a well-behaved subset of \(B^3 \subset \mathbb{R}^3\). And third, the information resource quantity \(||(q_1, q_2, q_3)||_{\ell^2}^2\) is strictly non-increasing under CPTP maps acting on subsystem \(A\), formalizing the degradation of extractable information under local noise.

\section{An Information Resource Phase Space\label{S5}}

The constraint (\ref{InfoConstraintTM}) expressed in Theorem \ref{T1} defines a hard upper bound on the joint accessibility of entanglement (via teleportation fidelity \(q_1\)), cloning fidelity \(q_2\), and coherence-task fidelity \(q_3\) from any finite-dimensional quantum subsystem. This bound is geometric, convex, and tight, reflecting the structural incompatibility of simultaneously maximizing distinct quantum resources. The information resource functional
\begin{equation}\label{InfoResourceFCN}
    \mathcal{I} \equiv ||(q_1, q_2, q_3)||_{\ell^2}^2
\end{equation}
governs the distribution, maintenance, and degradation of quantum information resources across multipartite systems. While Corollaries \ref{C1} and \ref{C2} confirm the feasibility and convexity of the accessible region \(\mathcal{I} \in Q \subset \mathbb{R}^3\), Corollary \ref{C3} reveals that open-system dynamics (e.g., local CPTP maps) cause strict degradation of the informational norm \(\mathcal{I}\). These results establish the constraint as both a static exclusion and a resource monotone. 

In this section, we extend this framework by demonstrating that \(\mathcal{I}\) is not merely bounded and monotonic, but also conserved under coherent evolution; a dynamical invariant of closed-system unitary trajectories. This promotes \(\mathcal{I}\) to the status of a conserved quantity, analogous to energy or entropy flux, within an operational quantum information phase space \((Q, \mathcal{I})\), and from which we

\subsection{Information Resource Invariance}

In quantum systems governed by closed, unitary dynamics, certain quantities remain invariant even as individual degrees of freedom evolve. Here we investigate whether the informational resource norm \(\mathcal{I}\), which captures the joint operational capacity of a subsystem, exhibits such conservation under coherent evolution. Specifically, we demonstrate global unitaries acting on the full tripartite system preserve the total informational budget, even as resource components redistribute. The following theorem affirms this invariance, provided the coherence reference frame remains fixed throughout the evolution.

\begin{theorem}[Quantum Information Resource Invariance (QIRI)]\label{T2}
For a tripartite state \(\rho_{ABC}(t)\) evolving under global unitary \(U_{ABC}(t)\), the quantity \(\mathcal{I}(t) = q_1^2 + q_2^2 + q_3^2\) is conserved
\begin{equation}\label{InfoConserved}
    \frac{d}{dt} \mathcal{I}(t) = 0
\end{equation} 
provided the unitary preserves the coherence basis \(\mathcal{O}\) defining \(q_3\).
\end{theorem}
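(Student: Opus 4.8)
The plan is to prove (\ref{InfoConserved}) by differentiating the resource functional along the unitary flow and exhibiting an exact cancellation among the three contributions. Writing the (possibly time-dependent) generator of $U_{ABC}(t)$ as $G(t)$, so that $\dot\rho_{ABC} = -i[G(t),\rho_{ABC}]$, I would first record the induced equations of motion for the marginals, $\dot\rho_{AB} = -i\,\mathrm{Tr}_C[G,\rho_{ABC}]$, $\dot\rho_{AC} = -i\,\mathrm{Tr}_B[G,\rho_{ABC}]$, and $\dot\rho_A = -i\,\mathrm{Tr}_{BC}[G,\rho_{ABC}]$. The chain rule then gives $\frac{d}{dt}\mathcal{I} = 2\big(q_1\dot q_1 + q_2\dot q_2 + q_3\dot q_3\big)$, so the theorem reduces to the single identity $q_1\dot q_1 + q_2\dot q_2 + q_3\dot q_3 = 0$. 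The conceptual point I would emphasize at the outset is that, in sharp contrast to the CPTP setting of Corollary \ref{C3}, a global unitary degrades no coordinate; it \emph{redistributes} resources among the subsystems, so the claim is a genuine conservation law and not term-by-term invariance of the marginals (whose spectra are not preserved by $U_{ABC}$).

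Second, I would dispose of the coherence term using the coherence-basis hypothesis. I interpret this hypothesis as the requirement that the induced $A$-marginal dynamics act unitarily and commute with the generator $H$ defining $F_Q$; under that reading $\rho_A(t)$ evolves as $V(t)\rho_A(0)V(t)^\dagger$ with $[V(t),H]=0$, and since $F_Q(V\rho V^\dagger,H)=F_Q(\rho,H)$ whenever $[V,H]=0$, the coherence coordinate is stationary, $\dot q_3 = 0$. The same condition fixes $S(\rho_A)$. Making this interpretation precise, i.e.\ identifying the exact subgroup of global unitaries whose $A$-marginal action is unitary and frame-commuting, is itself part of the work, and I would carry it out before proceeding. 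The problem then collapses to the surviving balance $q_1\dot q_1 + q_2\dot q_2 = 0$.

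Third, for this balance I would pass to the purification picture used in the proof of Theorem \ref{T1}, relating $q_1^2$ and $q_2^2$ to the correlations $I(A{:}B)$ and $I(A{:}C)$ that $A$ shares with $B$ and $C$. Since the previous step fixes both $S(\rho_{ABC})$ (global unitarity) and $S(\rho_A)$, the correlation budget that the relation $I(A{:}B)+I(A{:}C)\le 2S(\rho_A)$ apportions between the $B$ and $C$ sectors is held constant; the content of the theorem is that this constancy is exactly $q_1^2 + q_2^2 = \mathrm{const}$. I would make this quantitative by differentiating the two correlation measures and showing that the cross terms the generator $G$ produces between the $AB$ and $AC$ sectors are equal and opposite, so the flow moves the resource vector tangentially along the sphere $q_1^2+q_2^2 = \mathrm{const}$.

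The hard part will be this last step, for two reasons. First, $F_{\text{tele}}$ and $F_{\text{clone}}$ are defined by optimizations over protocols and channels, so differentiating them along the flow requires an envelope (Danskin) argument to differentiate through the optimum and a verification that the optimizers vary smoothly with $t$. Second, and more seriously, the entropic relation invoked in Theorem \ref{T1} is only an inequality, so upgrading it to the exact identity $q_1^2+q_2^2=\mathrm{const}$ does not follow for free; I expect the cancellation to be exact only for the subgroup of global unitaries commuting with the full generating set of all three tasks (not merely with $\mathcal{O}$), where (\ref{InfoConserved}) acquires the status of a Noether-type invariant. I would therefore close by stating precisely these commutation/closure conditions on $U_{ABC}(t)$, and noting that for generic basis-preserving unitaries only the weaker monotone and boundedness statements of Corollary \ref{C3} and Theorem \ref{T1} are guaranteed.
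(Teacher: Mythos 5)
Your proposal, by its own admission, does not close: everything funnels into the single identity \(q_1\dot q_1 + q_2\dot q_2 = 0\), and at exactly that point you concede that the entropic relation \(I(A{:}B)+I(A{:}C)\le 2S(\rho_A)\) is only an inequality, that the exact cancellation does not follow from it, and that you expect conservation only for a subgroup strictly smaller than the one hypothesized in the theorem. A plan whose decisive step is flagged as unobtainable with the stated tools is a gap, not a proof. There is a second, related soft spot: your reading of the hypothesis---that \([U_{ABC},\mathcal{O}\otimes\mathbb{I}_{BC}]=0\) forces the induced dynamics on \(A\) to be unitary and \(H\)-commuting, whence \(\dot q_3=0\) and \(S(\rho_A)\) constant---is false in general. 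Take \(\mathcal{O}=\sigma_x\) on a qubit \(A\) and \(U_{ABC}=\bigl(\ket{+}\bra{+}_A\otimes\mathbb{I}_B+\ket{-}\bra{-}_A\otimes X_B\bigr)\otimes\mathbb{I}_C\): this commutes with \(\sigma_x\otimes\mathbb{I}_{BC}\), yet it maps \(\ket{0}_A\ket{0}_B\) to a maximally entangled state, sending \(\rho_A\) from pure to maximally mixed, so \(q_3\) drops from \(1\) to \(0\) (while \(q_1\) rises from \(0\) to \(1\)). So even the step you treated as dispatched is not; any correct argument must cope with coordinates that individually change under the allowed unitaries.

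That said, your structural instinct is sounder than the paper's own proof, which takes precisely the term-by-term route you rejected: it asserts that global unitaries preserve the spectra of \(\rho_{AB}(t)\) and \(\rho_{AC}(t)\) (hence \(q_1,q_2\) constant) and that \(U_A(t)=\mathrm{Tr}_{BC}(U_{ABC}(t))\) is a unitary commuting with \(H\) (hence \(q_3\) constant). Both assertions fail for genuinely global unitaries: marginal spectra are not invariants of global evolution (your own observation, and the example above confirms it), and the partial trace of a unitary operator is not a unitary at all. The paper's argument is valid only for factorized evolutions \(U_A\otimes U_{BC}\) with \([U_A,H]=0\), where conservation is trivial component-wise and the theorem says nothing about redistribution. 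So the honest comparison is this: the paper completes its proof by an overreach you correctly refused to make, while you correctly identified the true content of the claim---redistribution of resources along a sphere of constant \(\mathcal{I}\)---but could not establish it, and the obstruction you pinpoint (fidelities defined by optimizations, an inequality where an identity is needed) is real. Neither argument proves the theorem as stated; yours has the virtue of making visible exactly what additional hypotheses or machinery a correct proof would require.
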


The proof of Theorem \ref{T2} is straightforward.
\begin{proof}
Global unitaries preserve the spectra of \(\rho_{AB}(t)\) and \(\rho_{AC}(t)\), leaving teleportation and cloning fidelities invariant. Thus \(q_1, q_2\) are unitary invariant. Moreover, for \(U(t)\) commuting with \(\mathcal{O}\), the quantum Fisher information \(F_Q(\rho_A(t), H)\) is invariant because
\[
F_Q(U_A(t) \rho_A(0) U_A^\dagger(t), H) = F_Q(\rho_A(0), H)
\]  
where \([U_A(t), H] = 0\) and \(U_A(t) = \text{Tr}_{BC}(U_{ABC}(t))\). Thus, the coherence-task fidelity \(q_3\) is conserved under unitary evolution. Hence, since \(\frac{dq_a}{dt} = 0\) for all \(a\), \(\mathcal{I}(t)\) is constant.
\end{proof}

This result establishes that the tradeoff relation (\ref{InfoConstraintTM}) is not merely a kinematic bound, but a dynamic invariant under global unitary evolution. As a quantum state evolves, its informational resources (teleportation capability, cloning fidelity, and coherence utility) may be redistributed among subsystems, while the total accessible information budget \(\mathcal{I}\) remains conserved. This conservation reinforces the inequality (\ref{InfoConstraintTM}) as a fundamental constraint on quantum information resource dynamics, complementing prior foundational work in quantum metrology, communication, and resource theory \cite{Giovannettietal2011,TothApellaniz2014,Brandaoetal2015}.

From this perspective, Theorem \ref{T2} is an isometry of an information resource phase space \((Q, \mathcal{I})\). It is a structural invariance of quantum information theory, distinct from energy/momentum conservation, but analogous to both charge conservation in gauge theories (e.g., global \(U(1)\) symmetry conserves particle number), and geometric phases in adiabatic dynamics (e.g., Berry phase as a holonomy invariant).

It is important to note that the conservation of the resource functional \(\mathcal{I}\) assumes a fixed external reference frame for coherence measurements. In particular, we define the fidelity \(q_3\) in a fixed coherence-based task (e.g., phase estimation), and is not dynamically updated under the evolution of \(\rho_A\). Physically, this corresponds to coherence being measured in a laboratory frame (e.g., by fixed measurement devices) or fixed reference structure (e.g., predetermined control observables). Unitary operations that rotate the measurement basis will, in general, change the value of \(q_3\), and thus violate the invariance condition.

Future work may explore coherence fidelity relative to dynamical frames and the ways in which such frames may alter the invariance properties of \(q_3\). Of course, this would require a reformulation of \(q_3\) as a functional dependent on the full state trajectory and observable algebra.

\subsection{An Information Resource Symmetry Group}

Theorem \ref{T2} demonstrates, the composite functional \(\mathcal{I}(t)\) remains invariant under unitary evolution for closed tripartite systems. This structure represents a conservation of quantum information resources analogous to Liouville’s theorem in classical mechanics, where the phase space volume is preserved under Hamiltonian flow \cite{Arnold2013}. 

Crucially, however, this invariance is not a generic consequence of unitarity alone, but rather it depends sensitively on the operational definitions of each resource. In particular, the coherence coordinate \(q_3\) is basis-dependent by design, representing fidelity in phase-sensitive tasks such as metrology or interference. The conservation of \(\mathcal{I}(t)\) assumes that the global unitary evolution preserves the coherence reference structure; that is, the observable generating phase evolution remains fixed. Physically, this reflects the realistic scenario in which coherence is assessed in a laboratory frame relative to a static Hamiltonian or measurement basis. Under this condition, Theorem \ref{T2} reinterprets quantum evolution as constrained motion on a finite-capacity information manifold, governed by the spherical geometry of the exclusion bound. This unifies resource tradeoffs in quantum communication, cloning, and metrology under a single geometric and dynamical framework.

While \(\mathcal{I}(t)\) is not a Noether charge in the traditional sense, it is a conserved informational invariant arising from a symmetry group (\(\mathcal{G}\)) acting on quantum resources. The conservation of \(\mathcal{I}(t)\) arises from invariance under the group
\begin{equation}\label{InfoSymGroup}
    \mathcal{G} = \{ \ U_{ABC} \in U(\mathcal{H}) \ \mid \ [ \ U_{ABC}, \ (\mathcal{O} \otimes \mathbb{I}_{BC}) \ ] = 0 \ \}   
\end{equation}
where \(\mathcal{O}\) defines the coherence basis for \(q_3\). This group has a Lie algebra \(\mathfrak{g}\) of Hermitian generators \(H\) satisfying \([ \ U_{ABC}, \ (\mathcal{O} \otimes \mathbb{I}_{BC}) \ ] = 0\). The symmetry group \(\mathcal{G}\) preserves the information structure (teleportation, cloning, and task-coherence capacities), and the conserved quantity \(\mathcal{I}(t)\) is a \(\mathcal{G}\)-invariant scalar on the information resource configuration state space. The conservation of \(\mathcal{I}\) under \(\mathcal{G}\)-symmetric unitaries has no analog in standard QRTs \cite{Brandaoetal2015,ChitambarGour2019}. Thus, \(\mathcal{I}\) constitutes a dynamical invariant beyond the reach of static QRT monotones. This establishes a foundational link between quantum symmetries and information-theoretic capacities, opening pathways for a generalized Noether theorem in quantum resource theories.

In Lie algebra representation theory, Casimir operators \(\mathcal{C}\) commute with all group generators and yield conserved scalars for irreducible representations. In this way, \(\mathcal{I}(t)\) acts as a generalized Casimir invariant for the symmetry group \(\mathcal{G}\). However, there are several key differences that distinguish \(\mathcal{I}(t)\). It is a geometric invariant on the quantum resource manifold \(Q \subset \mathbb{R}^3\), not an algebraic operator. It defines orbits of constant radius \(\mathcal{I}(t) = c\) under \(\mathcal{G}\)-symmetric unitary dynamics, confining state evolution to spherical surfaces in \(Q\). On the other hand, \(\mathcal{I}(t)\) exhibits strict contraction under open-system dynamics. Local CPTP maps \(\mathcal{E}_A\) degrade resources, forcing \(\mathcal{I}(t)\) to evolve inward toward the origin of \(Q\) \cite{Ruskaietal2002,Wilde_2013}. This degradation mirrors classical dissipative systems (e.g., energy loss to friction), with quantum resources diminished by decoherence, noise, or measurement \cite{Zurek2003,Nielsen_Chuang_2010}. 

Consequently, \(\mathcal{I}(t)\) serves a dual dynamical role. It is a conserved invariant under \(\mathcal{G}\)-preserving unitaries, akin to Hamiltonian phase-space orbits, and a Lyapunov function under CPTP maps, monotonically decreasing and establishing an arrow of time for quantum information loss \cite{Vedral2002,BrandaoGour2015}. The phase space \((Q, \mathcal{I})\) thus unifies reversible dynamics (i.e., constant-\(\mathcal{I}\) orbits) with irreversible degradation (i.e., inward \(\mathcal{I}\)-contraction), providing a quantum analog of entropy increase in resource theories \cite{Brandaoetal2015}.

\subsection{A Principle of Quantum Resource Complementarity}

Together, these results uncover a fundamental symmetry governing the allocation and evolution of quantum information resources. We propose a quantum resource complementarity principle that unifies three essential features: (i) the kinematic exclusion of simultaneous resources (Theorem \ref{T1}), (ii) their dynamical conservation under symmetry-preserving unitaries (Theorem \ref{T2}), and (iii) the irreversible degradation under generic quantum operations (Corollary \ref{C3}). This framework elevates the invariant \(\mathcal{I}\) from a constraint to a cornerstone of quantum information geometry.

\begin{principle}[Principle of Quantum Resource Complementarity (QRC)]\label{P1}
For any tripartite quantum state \(\rho_{ABC} \in \mathrm{D}(\mathcal{H}_A \otimes \mathcal{H}_B \otimes \mathcal{H}_C)\) with operational resource coordinates \(\vec{q} = (q_1, q_2, q_3)\) defined as the normalized fidelity advantages for teleportation (\(A \to B\)), cloning (\(A \to C\)), and coherence-based tasks (relative to a fixed observable \(\mathcal{O}\)). The \(\ell^2\)-norm is an informational invariant  
\[
\mathcal{I} = \|\vec{q}\|_2^2 = q_1^2 + q_2^2 + q_3^2
\]  
that satisfies three fundamental properties:
\begin{enumerate}[wide, align=left]
    \item Kinematic Constraint (Exclusion Principle). \(\mathcal{I} \leq 1\) for all valid quantum states, with saturation occurring if and only if the state maximizes one resource at the complete exclusion of the others. This defines a convex, compact feasibility region \(Q \subset \mathbb{R}^3\) (Theorem \ref{T1}, Corollary \ref{C1}), geometrically encoding the incompatibility of entanglement, cloning, and coherence optimization.
    \item Dynamical Symmetry (Conservation Law). Let \(\mathcal{G} = \{ U \in \mathsf{U}(\mathcal{H}) \mid [U, \mathcal{O} \otimes \mathbb{I}_{BC}] = 0 \}\) be the group of unitaries preserving the coherence reference frame. Then  
    \[
        \mathcal{I}(U \rho_{ABC} U^\dagger) = \mathcal{I}(\rho_{ABC}) \quad \forall U \in \mathcal{G}
    \]  
    The orbits \(\{ \ \Phi(U \rho U^\dagger) \ \mid \ U \in \mathcal{G} \ \}\) are confined to level sets \(\mathcal{I} = \text{constant}\) within \(Q\) (Theorem \ref{T2}, Corollary \ref{C2}). This identifies \(\mathcal{I}\) as a generalized Casimir invariant for \(\mathcal{G}\), inducing isometric flows that redistribute task-specific capacities while preserving the total informational budget.
    \item Resource Monotonicity (Thermodynamic Arrow). For any local CPTP map \(\mathcal{E}_A\) acting on subsystem \(A\),  
    \[
        \mathcal{I}\left( \ (\mathcal{E}_A \otimes \mathbb{I}_{BC})[\rho_{ABC}] \ \right) \leq \mathcal{I}(\rho_{ABC})
    \]  
    with equality if and only if \(\mathcal{E}_A\) is a \(\mathcal{G}\)-preserving unitary operation (Corollary \ref{C3}). The strict contraction under generic noise establishes \(\mathcal{I}\) as a Lyapunov function for quantum informational degradation, defining an arrow of time analogous to entropy production.
\end{enumerate}
\end{principle}

The QRC Principle establishes \(\mathcal{I}\) as a fundamental invariant that bridges the geometry of quantum resources, the symmetry of their evolution, and the thermodynamics of their degradation. It provides a unified lens through which to analyze the capacities and limitations of quantum systems across physics.

\section{Experimental and Theoretical Implications}\label{S6}

The QRC Principle transforms the informational norm \(\mathcal{I}\) from a static constraint into a dynamical signature of quantum information’s intrinsic symmetry. It unifies three core aspects of quantum mechanics: the kinematic tradeoff constraints between operational resources, the conservation of total resource capacity under symmetry-preserving evolution, and the irreversible degradation of informational structure under noise. Within the phase space \((Q, \mathcal{I})\), quantum states evolve along spherical shells of constant radius under \(\mathcal{G}\)-symmetric unitaries, while open-system dynamics, such as decoherence and measurement, induce a strict inward flow toward the origin, reflecting complete resource depletion as \(\mathcal{I} \to 0\).

The foundational constraint established in Theorem \ref{T1} prohibits any quantum subsystem from simultaneously achieving maximal teleportation fidelity (\(q_1\)), cloning fidelity (\(q_2\)), and coherence-task fidelity (\(q_3\)) \cite{Wootters1982,Coffmanetal2000,Colesetal2017}. These limitations define a convex tradeoff region \(Q \subset \mathbb{R}^3\), which geometrically encodes the mutual incompatibility of fundamental quantum information tasks. Theorem \ref{T2} then shows that the total resource norm \(\mathcal{I}\) is conserved under unitary evolution that preserves the coherence basis \(\mathcal{O}\), while Corollary \ref{C3} demonstrates that \(\mathcal{I}\) decays monotonically under local CPTP maps \cite{Ruskaietal2002,Wilde_2013}. Together, these results define the QRC Principle: the squared \(\ell^2\)-norm \(\mathcal{I}\) is a conserved quantity under coherent, symmetry-preserving evolution and a Lyapunov function under noisy channels \cite{Vedral2002,Brandaoetal2011}. This defines an incompressible "informational volume" that bounds the total operational capability of any quantum subsystem. It is reminiscent of Liouville’s theorem in classical mechanics \cite{Arnold2013}, with the important distinction that while classical phase space volume is uniform, quantum informational volume is structured by task-based constraints, redistributing among teleportation, cloning, and coherence tasks within a conserved total norm.

\subsection{Experimental Falsifiability}

The QRC framework is not only theoretically grounded but also empirically testable. Critically, the resource coordinates \((q_1, q_2, q_3)\) are defined with executable tasks, unlike abstract QRT measures. Each coordinate corresponds to a distinct quantum information task with established measurement protocols. This enables direct experimental falsification, a feature absent in axiomatic QRTs \cite{ChitambarGour2019,VanMeter2014}. 

The teleportation fidelity \(q_1\) is derived from the average fidelity \(F_{\text{tele}}\) of a standard quantum teleportation protocol, normalized as \(q_1 = 3F_{\text{tele}} - 2\) \cite{Bennettetal1993}. (This assumes the teleportation channel acts on a qubit and uses the standard benchmark \(F_{\text{tele}} > 2/3\); it's accurate only under that specific model.) This fidelity is directly measurable through quantum state tomography and Bell-basis projections, widely implemented across photonic systems, trapped ions, and superconducting qubits \cite{Bouwmeesteretal1997,Steffenetal2013}. 

The cloning fidelity \(q_2\) corresponds to the optimal asymmetric universal cloning process, which has been realized in optical and NMR-based platforms \cite{Scaranietal2005,LamasLinaresetal002,Duetal2005}. Here, \(q_2 = F_{\text{clone}}\) directly encodes the limits of state duplication. 

The coherence-task fidelity \(q_3\) quantifies a system's ability to support basis-sensitive quantum tasks, such as phase estimation or interferometric visibility. This axis can be task-defined through normalized quantum Fisher information or fringe visibility and measured by Ramsey interferometry or optical phase estimation \cite{Giovannettietal2011,TothApellaniz2014,PezzeSmerzi2009}.

Because all three components are accessible in modern experimental architectures, one can empirically construct \((q_1, q_2, q_3)\) for any prepared quantum state. By sampling families of states such as GHZ, W, and Werner mixtures \cite{Greenbergeretal1990,Duretal2000,Werner1989} under tunable decoherence, one can map the corresponding resource points in \(\mathbb{R}^3\). Systematic reconstruction of this map enables empirical validation of the tradeoff surface \(Q\). Any observed resource vector lying outside the unit ball in the positive octant—specifically, any configuration with \(\mathcal{I} > 1\) or any component exceeding the physical limit \(q_a > 1\) would directly falsify the QRC framework.

\subsection{Theoretical Implications}

The QRC Principle provides a unifying geometric language for several fundamental constraints in quantum information theory. The monogamy of entanglement, traditionally framed in terms of bipartite entanglement measures, manifests geometrically along the teleportation axis \(q_1\) \cite{Coffmanetal2000,KoashiWinter2004}, linking entanglement to operational capacity via teleportation fidelity. The no-cloning and no-broadcasting theorems appear as quantitative constraints along the cloning axis \(q_2\) \cite{Wootters1982,Barnumetal1996}, capturing the limitations on duplicating unknown quantum states. Coherence-disturbance tradeoffs, typically described in terms of entropy or visibility decay, are embodied in the coherence axis \(q_3\) \cite{Baumgratz2014,Streltsov2017,MarvianSpekkens2014}, which reflects a subsystem’s utility in phase-sensitive metrological tasks.

This structure also resonates with ideas in quantum thermodynamics. The contraction of \(\mathcal{I}\) under CPTP maps mirrors the irreversible loss of thermodynamic free energy in classical systems. In this interpretation, \(\mathcal{I}\) acts as an informational analog of entropy; its decrease marks the dissipation of quantum structure, placing upper bounds on the extractable coherence, the distinguishability of states, and the ability to perform quantum work \cite{Brandaoetal2015,Lostaglioetal2015}. This makes \(\mathcal{I}\) a natural candidate for a nonequilibrium potential in resource-theoretic thermodynamics, generalizing the role of entropy to structured operational capacity.

The QRC Principle redefines quantum advantages through a task-geometric lens, transcending resource-theoretic abstraction. Future work will generalize this to continuous variables and gravitational contexts, where operational fidelities may constrain spacetime itself. On the practical side, the QRC framework offers a new tool for monitoring and optimizing quantum protocols. In distributed systems, the tradeoff region \(Q\) defines the space of feasible task configurations, enabling resource allocation strategies that respect fundamental limits. In noisy quantum processors, tracking the evolution of \(\mathcal{I}(t)\) can serve as a diagnostic for error accumulation, symmetry violation, or subsystem leakage. In entangled networks and federated quantum machine learning architectures, the resource decomposition given by \((q_1, q_2, q_3)\) can guide task routing, subsystem optimization, and control logic in information-aware ways \cite{VanMeter2014,Wehneretal2018,Bhartietal2022}.

\subsection{Limitations and Future Directions}

The current framework is rigorously defined only for finite-dimensional Hilbert spaces. All task fidelities, entropy measures, and Fisher information quantities are assumed to be bounded, well-defined, and trace-class in this setting. Generalizing the QIRC constraint and QRC Principle to infinite-dimensional systems, including continuous-variable states and quantum fields, will require new tools; especially for handling unbounded observables, divergent spectra, and infinite-dimensional coherence structures. Future work may also explore dynamical reference frames, where the coherence basis \(\mathcal{O}\) evolves with time, leading to a generalization of the symmetry group \(\mathcal{G}\) to dynamical settings.

\section{Conclusion}\label{S7}

This work introduces a geometric and dynamical framework that captures the interdependence of fundamental quantum information resources through the lens of operational capacity. The Quantum Information Resource Constraint (QIRC) defines a tight upper bound on the simultaneous realization of teleportation, cloning, and coherence tasks, enforcing a tradeoff surface \(Q\) that is both convex and physically grounded. Building upon this constraint, the Quantum Resource Complementarity(QRC) Principle identifies \(\mathcal{I}\) as a dynamical invariant preserved under unitary evolution and strictly contracting under decoherence. This unification bridges kinematics, symmetry, and irreversibility in a single informational structure.

The implications of this framework are far-reaching. It provides an experimentally testable geometry for quantum performance, a conserved quantity for unitary dynamics, and a Lyapunov function for irreversible degradation. These results redefine fundamental limits on quantum tasks and suggest that informational capacity may obey conservation laws akin to energy or entropy.

Beyond operational implications, this structure invites reinterpretation of long-standing quantum principles through a unified, resource-based lens. It opens speculative avenues into thermodynamics and gravitational physics, where quantum information may serve not only as a tool for analysis but as a foundational constraint on what spacetime itself can encode.

Future work will extend these results to infinite-dimensional systems and explore deeper algebraic symmetries underpinning the QRC Principle. Success here may reveal generalized conservation laws for quantum information, potentially clarifying whether quantum structure itself constrains emergent spacetime.

\appendix
\section{Notation and Definitions Index\label{A1}}

This appendix defines all major symbols, quantum states, channels, and information-theoretic quantities used throughout the paper. Definitions are grouped by category for clarity, and include precise mathematical expressions, contextual usage, and interpretive notes where appropriate.

\subsection{Quantum Systems and States}\label{A11}

A Hilbert space \(\mathcal{H}\) is a complex vector space on which quantum systems are defined. Unless otherwise specified, all Hilbert spaces are assumed finite-dimensional.

A global tripartite quantum state (density matrix) \(\rho_{ABC} \in \text{D}(\mathcal{H})\) positive semidefinite, unit trace operator on the Hilbert space \(\mathcal{H} = \mathcal{H}_A \otimes \mathcal{H}_B \otimes \mathcal{H}_C\); where \(\text{D}( \cdot )\) denotes the set of positive semidefinite trace-1 operators.

Reduced state (marginal) \(\rho_{A} = \text{Tr}_{BC}( \rho_{ABC} )\) are given by the partial trace over subsystems \(B\) and \(C\). The joint (reduced) state is similarly defined, \(\rho_{AB} = \text{Tr}_{B}( \rho_{ABC} )\).

Bell state 
\[
    \ket{\Phi^+} = \frac{1}{\sqrt{2}} ( \ket{00} + \ket{11} )
\]
is a canonical example of a maximally entangled pure state in \(\mathbb{C}^2 \otimes \mathbb{C}^2\).

Werner state
\[
    \rho_{AB}^{\text{(Werner)}} = p \ket{\Phi^+}\bra{\Phi^+} + (1-p) \frac{\mathbb{I}}{4}
\]
with \(0 \leq p \leq 1\), is a convex mixture of a Bell state and the maximally mixed state (entangled if \(p > 1/3\)). Werner states are used as a testbed for boundary cases and fidelity calculations.

GHZ state 
\[
    \ket{GHZ} = \frac{1}{\sqrt{2}} ( \ket{000} + \ket{111} )
\]
is a tripartite (weakly) entangled pure state. GHZ is used to probe saturation of correlations and coherence simultaneously. 

W state
\[
    \ket{W} = \frac{1}{\sqrt{3}} ( \ket{001} + \ket{010} + \ket{100} )
\]
is a robust, distributed tripartite entanglement. W provides contrast with GHZ for analyzing multipartite entanglement structure.

Gibbs states 
\[
    \rho_{ABC} = \frac{e^{-\beta H}}{\text{Tr}(e^{-\beta H})}
\]
where \(H = H_A + H_B + H_C + J( A:B + A:C )\) is a coupled Hamiltonian with interaction \(J\) between states \(A\) and \(B\), and \(A\) and \(C\). 

\subsection{General Information-Theoretic Quantities}\label{A12}

von Neumann entropy \(S(\rho)\) measures uncertainty or mixedness of a quantum state 
\[
    S(\rho) = -\text{Tr}(\rho\log\rho)
\]
For a pure state \(S(\rho) = 0\).

Measurement entropy \(H_{\mathcal{O}}(\rho)\) is the Shannon entropy of measurement outcomes of \(\rho\) in an observable basis \(\mathcal{O} = \{ \ket{\phi_k} \}\) \cite{Benentietal2018}
\[
    H_{\mathcal{O}}(\rho) = -\sum_k \bra{\phi_k}\rho\ket{\phi_k}\log\bra{\phi_k}\rho\ket{\phi_k}
\]
The measurement entropy quantifies decoherence uncertainty in a chosen observable basis.

Mutual information \(I(A:B)\) quantifies total correlations between subsystems \(A\) and \(B\)
\[
    I(A:B) = S(\rho_A) + S(\rho_B) - S(\rho_{AB})
\]
Mutual information is a classical upper bound, used in background justification for informational tradeoffs.

Quantum relative entropy \(D(\rho||\sigma)\) is an asymmetric measure of distinguishability between two quantum states \(\rho\) and \(\sigma\) 
\[
    D(\rho||\sigma) = \text{Tr}( \rho ( \log\rho - \log\sigma ) )
\]
The relative entropy \(D(\rho||\sigma) \geq 0\). It is a resource-theoretic characterizations of coherence and informational cost.

Relative entropy of coherence \(C_{\text{rel.ent}}(\rho)\) quantifies the minimum distinguishability (as measured by quantum relative entropy) between \(\rho\) and the nearest incoherent state \(\delta\) in a fixed basis
\[
    C_{\text{rel.ent}}(\rho) = \min_{\delta \in I} D(\rho||\delta) 
\]
where \(I\) is the set of incoherent states (i.e., diagonal density matrices) with respect to a chosen reference basis.

\subsection{Operational Information-Theoretic Quantities}\label{A13}

Local Operations possibly supplemented by Classical Communications (LOCC). Local operations are untiary transformations of (generalized) measurements. Classical communications enable sharability of the local quantum operations (e.g., selection of maximally entangled states).

A general quantum channel \(\mathcal{E}_{\text{CPTP}}\) is any completely positive trace-preserving (CPTP) map acting on density matrices, typically used in theorems concerning monotonicity and invariance under allowed quantum dynamics.

The teleportation channel \(\mathcal{T}_{AB}\) is a quantum operation induced by the shared resource \(\rho_{AB}\) used to teleport states from \(A\) to \(B\). The teleportation channel is used to define teleportation fidelity \(F_{\text{tele}}\).

Maximal singlet fraction is the maximum overlap between \(\rho_{AB}\) and a maximally entangled Bell state \(\ket{\Phi^+}\) after local unitary rotation \(U\)
\begin{equation}\label{MxSinglet}
    f_{\max}(\rho_{AB}) = \max_{U} \bra{\Phi^+} (U\otimes\mathbb{I}) \rho_{AB} (U\otimes\mathbb{I})^{\dagger} \ket{\Phi^+}
\end{equation}
The maximal singlet fraction (\ref{MxSinglet}) characterizes the teleportation fidelity achievable using \(\rho_{AB}\) as a shared entanglement resource. If \(f_{\max} > 1/2\) quantum teleportation exceeds the classical fidelity bound.

(Average) Teleportation fidelity \(F_{\text{tele}}(A\to B; \rho_{AB})\) is the average fidelity of teleporting arbitrary pure input states \(\ket{\psi} \in \mathcal{H}_A\) from \(A\) to \(B\), using the channel \(\mathcal{T}_{\rho_{AB}}\) (and LOCC)
\begin{equation}\label{FidelityTele}
    F_{\text{tele}}(A\to B; \rho_{AB}) = \int d\psi \ \bra{\psi} \mathcal{T}_{\rho_{AB}}( \ket{\psi}\bra{\psi} ) \ket{\psi}
\end{equation}
Teleportation fidelity (\ref{FidelityTele}) is an operational measure of extractable quantum communication capacity from \(\rho_{AB}\); that is, it captures how well quantum information can be sent from \(A\) to \(B\). Classical reference limit, \(F_{\text{tele}} \to F_{\text{classical}} = 2/3\). Quantum limit, \(F_{\text{tele}} \to F_{\text{quantum}} = 1\). In the qubit case, e.g., a Werner state \(\rho^{\text{(Werner)}}\), fidelity depends linearly on \(f_{\max}\) (see \cite{Horodecki_1999})
\[
    F_{\text{tele}} = \frac{1}{3} ( 2 f_{\max} + 1 )
\]

The (approximate) cloning channel \(\mathcal{C}_{\text{clone}}\) is a completely positive trace preserving (CPTP) map approximating an ideal (forbidden) universal cloning operation. The cloning channel defines the cloning fidelity \(F_{\text{clone}}\), bounded by the no-cloning theorem.

(Average) Cloning fidelity \(F_{\text{clone}}(A\to C; \rho_{ABC})\) is the average fidelity between input state \(\ket{\psi}\) on \(A\) and the state \(\rho_{C}^{(\psi)}\) output on \(C\) after an approximate cloning process
\begin{equation}\label{FidelityClone}
    F_{\text{clone}}(A \to C; \rho_{ABC}) = \sup_{\mathcal{C}_{A\to C}} \int d\psi \ \bra{\psi}\mathcal{C}_{A\to C}(\rho_A)\ket{\psi}
\end{equation}
where the optimization is over cloning channels \(\mathcal{C}_{A\to C} : \mathcal{O}(\mathcal{H}_A) \to \mathcal{O}(\mathcal{H}_C)\) that produce a copy of \(\rho_A\) in subsystem \(C\), consistent with the correlations in the global state \(\rho\). Cloning fidelity (\ref{FidelityClone}) measures the information about \(A\) that can be extracted or “cloned” into \(C\), given residual correlations with \(B\). Cloning fidelity characterizes how much of \(A\)’s information can be duplicated elsewhere. For universal symmetric cloning of qubits, the optimal fidelity is \cite{Bruss_1998}
\[
    F_{\text{clone}}^{\text{opt}} = \frac{1}{6} \bra{\psi}( \ 4 \ket{\psi}\bra{\psi} + \mathbb{I} \ )\ket{\psi} = \frac{5}{6}
\]
Explicit expressions depend on the shared correlations \(\rho_{ABC}\) and the channel \(\mathcal{C}_{\text{clone}}\) used.

The interferometric coherence channel \(I_{\text{int}}\) encodes a phase \(\theta\) onto subsystem \(\rho\) through a Hamiltonian generator \(H\), i.e., \(\rho \mapsto e^{-i\theta H} \rho e^{i\theta H}\). The distinguishability of nearby phase-shifted states defines the quantum coherence resource of \(\rho\) for metrological tasks.

The quantum Fisher information is formally defined as
\[
    F_Q(\rho, H) \equiv 2 \sum_{i,j} \frac{(\lambda_i - \lambda_j)^2}{\lambda_i + \lambda_j} \left| \bra{\psi_i} H \ket{\psi_j} \right|^2
\]
where \(\rho = \sum_k \lambda_k \ket{\psi_k}\bra{\psi_k}\) is the spectral decomposition of \(\rho_A\), and the sum runs over all \(i,j\) such that \(\lambda_i + \lambda_j > 0\) \cite{BraunsteinCaves1994,TothApellaniz2014}. This expression is well-defined for all density matrices. For pure states \(\rho = \ket{\psi}\bra{\psi}\), it reduces to
\[
    F_Q(\psi, H) = 4 \ \text{Var}_\psi(H) = 4 \left( \ \langle H^2 \rangle - \langle H \rangle^2 \ \right)
\]

The quantum Fisher information is defined for all density operators. Its maximum over a Hilbert space of fixed dimension \(d = \dim\mathcal{H}\) is always attained by a pure state. Specifically,
\[
    F_Q^{\max}(d) = \sup_{\rho \in \text{D}(\mathcal{H})} F_Q(\rho, H) = 4 \sup_{\psi \ \text{pure}} \text{Var}_{\psi}(H) = 4 \left( \ \langle H^2 \rangle - \langle H \rangle^2 \ \right)
\]
This maximum is achieved by the equal superposition of the corresponding eigenstates of the Hermitian operator \(H\), which yields the state of maximal sensitivity (and hence maximal coherence resource value) in phase estimation tasks.

\section{Constructive Entropic Derivation of Theorem \ref{T1}\label{A2}}

In section \ref{S3}, we prove Theorem \ref{T1} using a contradiction argument that excludes the simultaneous saturation of resource coordinates \((q_1,q_2,q_3)\). In this appendix, we sketch a complementary derivation strategy that relies on entropy inequalities and mutual information constraints. This approach illustrates the entropic origin of the resource tradeoff and offers a pathway toward a fully constructive proof.

We begin with the observation that teleportation fidelity, \(q_1\), and cloning fidelity, \(q_2\), are both constrained by the correlations between subsystem \(A\) and its external environment. In particular, when \(\rho_{ABC}\) is a pure state, strong subadditivity and monogamy of mutual information imply
\[
    q_1 + q_2 \leq 2 \ \frac{S(\rho_A)}{\log d_A}
\]
where \(S(\rho_A)\) reflects the total correlation between \(A\) and the rest of the system, and \(\log d_A\) is the normalization scale for maximal entropy.

This bound arises from the fact that high-fidelity teleportation and cloning both require significant correlation with external subsystems \(B\) and \(C\), respectively. But as \(A\)’s marginal state \(\rho_A\) becomes more mixed, these correlations may be shared as \(S(\rho_A)\) increases, albeit in a mutually exclusive way.

At the same time, the coherence fidelity \(q_3\) is constrained by the quantum Fisher information associated with a fixed observable or interferometric task. For pure states, the Fisher information achieves its maximal value, enabling optimal phase estimation precision. But this capability is degraded by mixedness. In particular, the \(F_Q(\rho_A, H)\) obeys the bound
\[
    F_Q(\rho_A, H) \leq 4 \ \text{Var}_\rho(H) \left( \ 1 - \frac{S(\rho_A)}{\log d_A} \ \right)
\]
where \(\text{Var}_\rho(H) = \langle H^2 \rangle - \langle H \rangle^2\) is the maximum variance of the generator \(H\) over pure states in the fixed basis. Normalizing, the task fidelity \(q_3\) decreases monotonically with \(S(\rho_A)\), reflecting the degradation of coherence-based metrological performance \cite{Giovannettietal2011,TothApellaniz2014}.

Combining these observations suggests that $||( q_1, q_2, q_3 )||_{\ell^2}^2 \lesssim 1$ with corrections depending on the chosen interferometric task and the structure of the tripartite state \(\rho_{ABC}\). While this heuristic argument does not strictly imply the quadratic constraint \(||( q_1, q_2, q_3 )||_{\ell^2}^2 \leq 1\), it does show that the accessible resource coordinates are bounded by the entropy of \(\rho_A\) and that the combined extractable capacity is limited by fundamental uncertainty and correlation principles. Notably, the assumption of finite-dimensional Hilbert spaces is crucial. The normalization by \(\log d_A\) allows for bounded, dimensionless quantities. In infinite-dimensional settings, the behavior of these bounds must be revisited carefully. (The infinite-dimensional formulation is currently in preparation.)

Finally, we comment on the geometric structure of the exclusion region. The entropic tradeoff discussed here yields an approximate linear constraint on the sum $q_1 + q_2 + q_3$. However, the main result of Theorem \ref{T1} constrains the \(\ell^2\)-norm \(||( q_1, q_2, q_3 )||_{\ell^2}^2 \leq 1\), which suggests that the underlying information resource structure has quadratic character. This may arise from the overlap structure of fidelity-based protocols (e.g., quantum state discrimination, measurement variance, etc.) or from the geometry of state space under the Bures metric \cite{Nielsen_Chuang_2010}. It remains an open question of whether the spherical symmetry of the exclusion surface \(Q \subset \mathbb{R}^3\) reflects an optimality constraint in the space of simultaneously achievable tasks.

Taken together, these entropic arguments reinforce the core claim of Theorem \ref{T1} that no quantum subsystem can simultaneously optimize entanglement, cloning, and coherence. Rather, these quantities compete within a bounded resource capacity dictated by entropy, purity, and the structure of task-defined operational capacities. This insight supports the geometric constraint as a natural, perhaps even inevitable, feature of quantum information theory.

\begin{acknowledgments}
Portions of this work were developed using AI-assisted tools (OpenAI’s ChatGPT-4o and DeepSeek R-1) to help assess, polish, condense, and otherwise (lightly) edit the writing. All conceptual framing, interpretations, and final responsibility remain with the author. The material presented in this manuscript does not reflect any official position of the DoD.
\end{acknowledgments}

\bibliography{References.bib}

\end{document}